\newtheorem{theorem}{Theorem}
\begin{document}

\title{Analysis of stepped wedge cluster randomized trials in the presence of a time-varying treatment effect}

\author[1]{Avi Kenny\thanks{Corresponding author. Department of Biostatistics, University of Washington, 3980 15th Ave NE Box 351617, Seattle, WA 98195. avikenny@uw.edu}}
\author[1]{Emily Voldal}
\author[1]{Fan Xia}
\author[1]{Patrick J. Heagerty}
\author[1]{James P. Hughes}

\affil[1]{Department of Biostatistics, University of Washington, Seattle, WA, USA}

\maketitle

\abstract{Stepped wedge cluster randomized controlled trials are typically analyzed using models that assume the full effect of the treatment is achieved instantaneously. We provide an analytical framework for scenarios in which the treatment effect varies as a function of exposure time (time since the start of treatment) and define the ``effect curve'' as the magnitude of the treatment effect on the linear predictor scale as a function of exposure time. The ``time-averaged treatment effect'', (TATE) and ``long-term treatment effect'' (LTE) are summaries of this curve. We analytically derive the expectation of the estimator resulting from a model that assumes an immediate treatment effect and show that it can be expressed as a weighted sum of the time-specific treatment effects corresponding to the observed exposure times. Surprisingly, although the weights sum to one, some of the weights can be negative. This implies that the estimator may be severely misleading and can even converge to a value of the opposite sign of the true TATE or LTE. We describe several models that can be used to simultaneously estimate the entire effect curve, the TATE, and the LTE, some of which make assumptions about the shape of the effect curve. We evaluate these models in a simulation study to examine the operating characteristics of the resulting estimators and apply them to two real datasets.}

\section{Introduction}\label{sec_intro}

Cluster randomized trials (CRTs) involve randomizing groups of individuals to a treatment or control condition, and are often conducted when individual randomization is impractical. One type of CRT design is the stepped wedge, which has seen increased usage in recent years.\cite{hughes2015current} In a stepped wedge CRT, all clusters begin in the control state and eventually switch over to the treatment state in a staggered manner, with a random assignment of clusters to crossover times, or ``sequences''. Cluster randomized designs are often adopted to evaluate the impact of health care system level interventions through modification of the systematic processes used to treat patients. Data are typically collected from all clusters at each time point, often through a series of cross-sectional surveys. Some unique aspects of this design include the partial confounding of the treatment effect with time and the fact that all clusters are observed in both the control state and the treatment state. The strengths and limitations of the stepped wedge design have been extensively discussed in recent years.\cite{mdege2011systematic,kotz2012use,mdege2012there,kotz2012researchers,hemming2013stepped,hemming2015stepped,kotz2013stepped} In particular, the stepped wedge design is useful for situations in which it is considered unethical to entirely withhold the treatment from a portion of participants due to lack of equipoise and/or if it is not possible to implement the treatment simultaneously to all participants for logistical, financial, or other reasons.\cite{brown2006stepped}\\

Standard statistical models for analyzing data from stepped wedge CRTs typically include a treatment indicator variable, indexed by cluster and time, that equals zero when the cluster is in the control state and one when the cluster is in the treatment state.\cite{hussey2007design} The coefficient of this indicator variable can then interpreted as the treatment effect. This modeling choice implicitly assumes that the full effect of the treatment is reached immediately (i.e. within a single time step) and does not increase or decrease thereafter; for this reason, we refer to such models as \textit{immediate treatment} (IT) models. Thus, the IT model assumes that the shape of the \textit{effect curve} -- the level of the treatment effect as a function of time since the start of the treatment -- is fully known; this shape is depicted in Figure \ref{effect_curves}a. However, with the stepped wedge design, once each cluster crosses over to the treatment condition it will experience multiple time periods under the treatment condition with the possibility that the magnitude of treatment effect will change with increasing treatment experience. Such effect modification with increasing time is often referred to as the ``learning curve'' and is well-documented in areas such as surgery.\cite{hopper2007learning} As such, the assumption of an immediate treatment effect may be violated in some settings.\\

We follow Nickless et al.\cite{nickless2018mixed} and use the term ``exposure time'' to refer to the amount of time that has passed since the start of the treatment for a given cluster, where the start of the treatment corresponds to exposure time 0. This is contrasted with ``study time'', which is the amount of time that has passed since the start of the study. There are several ways in which a treatment effect can vary with exposure time. The effect might not be realized until the second or third time point following the treatment, but then reaches its full effect almost immediately; we refer to this as a ``delayed effect'' (Figure \ref{effect_curves}b). Alternatively, the magnitude of the effect might vary as a function of exposure time (Figure \ref{effect_curves}c,d,e). It is also possible for the treatment effect to vary as a function of study time (e.g. an intervention is more effective in the summer versus the winter), but we do not consider this case in this paper.\\

\begin{figure}[H]
\centerline{\includegraphics[width=6in]{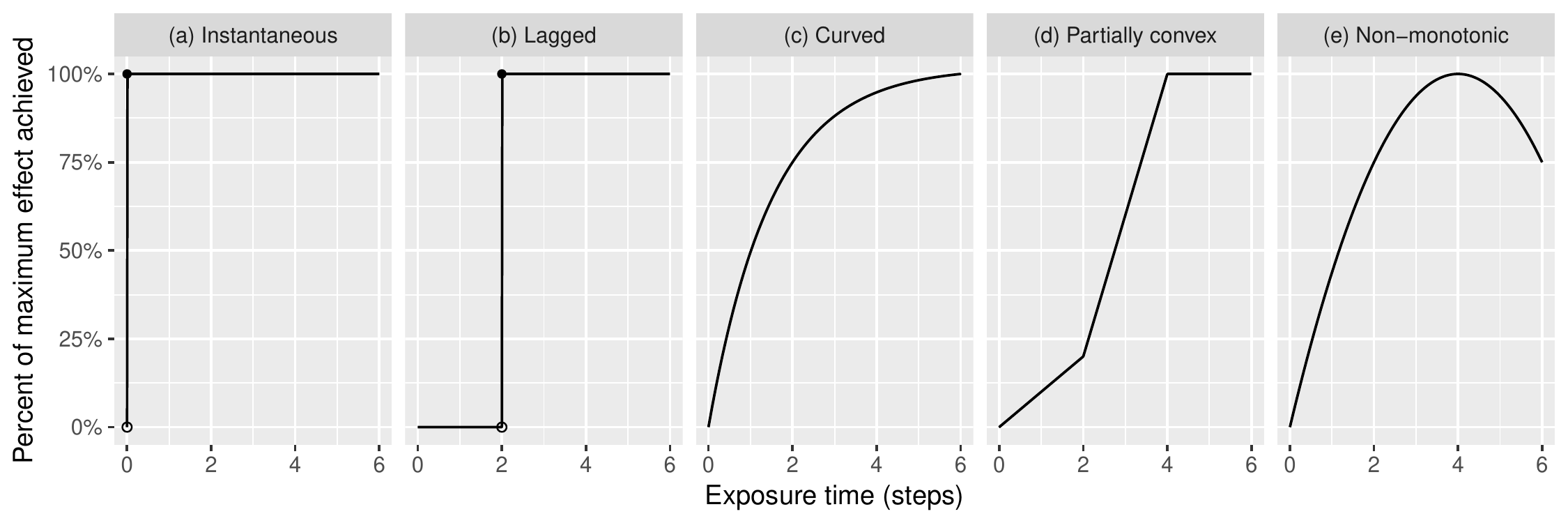}}
\caption{Several possible effect curves: treatment effect as a function of exposure time\label{effect_curves}}
\end{figure}

In some scenarios, the assumption of an immediate treatment effect seems justifiable. For example, it is reasonable to think that the effect of the implementation of a surgical safety checklist on patient outcomes would be immediate.\cite{haugen2015effect} However, this assumption feels less plausible in many other scenarios. For example, the effect of starting a new exercise regimen on clinical depression may take weeks or months to achieve, with the level of the effect increasing with time.\cite{craft1998effect} As another example, community health programs often involve a series of training modules for health workers that occur over a period of months, each of which is designed to have a separate effect on under-five mortality.\cite{white2018community}\\

The problem of a time-varying treatment effect in the context of stepped wedge CRTs was first considered by Granston et al.\cite{granston2014addressing}, who demonstrated that modeling the treatment effect using an indicator function can lead to biased estimates and invalid inference when the true effect curve is time-varying. They introduced a parametric model to account for the time-varying effect which assumes that the effect curve falls in a two-parameter family of concave functions. However, this model involves a complex two-stage estimation process and leads to incorrect inference when the true effect curve does not fall within this family, and fails to converge when the true effect curve is at least partially convex, as in Figure ~\ref{effect_curves}d (see supplementary material).\\

Hughes et al.\cite{hughes2015current} further considered the problem of a time-varying treatment effect and noted that in certain cases, this issue can be prevented by careful study design, such as increasing the length of time between steps, switching from an outcome endpoint that may take years to change to a process endpoint that will change more rapidly, or including a ``washout period'' immediately after implementation during which no data are collected. They also suggested the approach of changing the indicator variable representing the presence of the treatment such that this variable can take on fixed values between zero and one. However, this assumes that the shape of the effect curve is fully known.\\

Hemming et al.\cite{hemming2017analysis} described a model that includes fixed effects corresponding to study time by treatment interaction terms, with time treated as a categorical variable. They concluded that the confidence intervals for each interaction term coefficient were too wide for the model to be useful; however, they did not consider combining the interaction term coefficients into a single estimator.\\

As part of a simulation study, Nickless et al.\cite{nickless2018mixed} tested several models that account for time-varying treatment effects. They consider four types of interaction terms between time and treatment, which differ in terms of how time is modeled. Time can be either study time or exposure time, and can be modeled as continuous or categorical. The model that includes an interaction between treatment and categorical study time is equivalent to the Hemming et al.\cite{hemming2017analysis} model.\\

Although several authors have touched on the issue of time-varying treatment effects, we see several major gaps in the literature. First, no one has characterized the behavior of the standard treatment effect estimator when the assumption of an immediate treatment effect is violated. Second, there is ambiguity and lack of precise terminology around estimands of interest. Third, different models that account for time-varying treatment effects have not been studied in a unified manner.\\

This paper is organized as follows. In section 2, we analytically examine the behavior of the class of models that assume an immediate treatment effect and show that they may exhibit counterintuitive behavior in the presence of time-varying treatment effects. In section 3, we define several potential estimands of interest and discuss the assumptions that a researcher might be willing to make about the shape of the effect curve, including smoothness, monotonicity, and time until the maximum effect is reached. We then introduce a new set of models that account for time-varying treatment effects. One model accounts for the effect curve using fixed effects corresponding to exposure time by treatment interaction terms and other models do so using spline terms. We discuss how these models can be used for estimation and hypothesis testing in the context of both confirmatory and exploratory analyses. In section 4, we perform a simulation study to test the behavior of the models across a selection of data-generating mechanisms. In section 5, we illustrate the use of these models in real datasets. In section \ref{sec_discussion}, we discuss these results and their implications for the analysis of stepped wedge trials.\\

\section{Behavior of the IT model under a time-varying treatment effect}\label{sec_behavior}

In this section, we analytically examine the behavior of the IT model treatment effect estimator when the true treatment effect varies as a function of exposure time. Throughout this paper, we assume that the data come from a ``standard'' stepped-wedge trial, meaning that there are $J$ equally-spaced time steps, $Q=J-1$ sequences, and an equal number of clusters per sequence, with cross-sectional measurements taken at each time point. Assume the outcome data are continuous, and let $Y_{ijk}$ denote the observed outcome for individual $k \in \{1,...,K\}$ within cluster $i \in \{1,...,I\}$ at time $j \in \{1,...,J\}$. Also let sequences be labeled such that in sequence $q$, the treatment is introduced at time point $j=q+1$. For simplicity, we focus our analysis on the Hussey and Hughes model\cite{hussey2007design}, a special case of the IT model that is commonly used to analyze data from stepped wedge trials, although the behavior we observe holds more generally. This model is specified in (\ref{eq_husseyhughes}), where the $\beta_j$ terms represent the underlying time trend (with $\beta_1=0$ for identifiability), $\alpha_1,...,\alpha_I \overset{iid}{\sim} N(0,\tau^2)$ is a set of random effects accounting for the dependence of observations within a cluster, and $\epsilon_{111},...,\epsilon_{IJK} \overset{iid}{\sim} N(0,\sigma^2)$ are residual error terms.

\begin{equation}\label{eq_husseyhughes}
\begin{aligned}
	Y_{ijk} &= \mu + \beta_j + \delta X_{ij} + \alpha_i + \epsilon_{ijk} \\
	X_{ij} &= \begin{cases}
		1, & \text{cluster } i \text{ is in the treatment state at time } j \\
		0, & \text{otherwise}
	\end{cases}
\end{aligned}
\end{equation}\\

Next, let $s_{ij}$ represent the exposure time of cluster $i$ at time $j$ and let $\delta(s)$ represent the treatment effect at exposure time $s$, which we refer to as the ``point treatment effect'' (PTE) at $s$. Model (\ref{eq_datagen}) incorporates the time-varying treatment effect $\delta(s_{ij})$, where $X_{ij}$ is defined as above:

\begin{equation}\label{eq_datagen}
	Y_{ijk} = \mu + \beta_j + \delta(s_{ij}) X_{ij} + \alpha_i + \epsilon_{ijk}
\end{equation}\\

When the treatment effect varies with time, as in model (\ref{eq_datagen}), we may be interested in summary parameters such as the \textit{time-averaged treatment effect} (the average value of the effect curve over some period of time) or the \textit{long-term effect}; we will later define these terms more formally.\\

Suppose that data are generated according to (\ref{eq_datagen}) but analyzed with the model specified in (\ref{eq_husseyhughes}). We are interested in how the estimator $\hat{\delta}$ behaves in this scenario. Since $s_{ij}$ takes on values between $1$ and $J-1$, it would not be unreasonable to expect the IT estimator to converge to a value close to the time-averaged treatment effect over the study period, lying between the smallest and largest of the time-specific treatment effects $\delta(1),...,\delta(J-1)$. However, this turns out to not necessarily be the case. This conclusion follows from Theroem \ref{thm_it}, in which we provide closed-form expressions for the treatment effect estimator resulting from model (\ref{eq_husseyhughes}) and its expectation.\\

\begin{theorem}\label{thm_it}
\textit{Suppose we have a standard stepped wedge design with data generated according to (\ref{eq_datagen}). If we define $\phi \equiv \tau^2/(\tau^2+\sigma^2/n)$ and denote the mean outcome in sequence $q \in \{1,...,Q\}$ at time point $j$ by $\bar{Y}_{qj}$, the treatment effect estimator $\hat{\delta}$ obtained by fitting model (\ref{eq_husseyhughes}) via weighted least squares can be expressed as:}

\begin{equation}
	\hat{\delta} = \frac{12(1+\phi Q)}{Q(Q+1)(\phi Q^2+2Q-\phi Q-2)}
	\sum_{q=1}^Q \sum_{j=1}^J \left[ Q I\{j>q\} -j+1 +\frac{\phi Q(2q-Q-1)}{2(1+\phi Q)} \right] \bar{Y}_{qj} 
\end{equation}\\

\textit{Furthermore, the expectation of the treatment effect estimator $\hat{\delta}$ can be expressed as:}

\begin{equation}\label{eq_it_expectation}
	E[\hat{\delta}] = \sum_{s=1}^{J-1} w(Q,\phi,s) \delta(s),
\end{equation}\\

\textit{where}

\begin{equation}\label{eq_weightfunction}
	w(Q,\phi,s) \equiv \frac{
	  6(s-Q-1)\left( (1+2\phi Q)s - (1+\phi+\phi Q)Q \right)
	}{
	  Q(Q+1)(\phi Q^2 + 2Q - \phi Q-2)
	}
\end{equation}\\

\end{theorem}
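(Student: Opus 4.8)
The plan is to treat $\hat\delta$ as a generalized (weighted) least squares estimator and to compute it in closed form by exploiting the balance of the standard stepped wedge design. First I would collapse the individual-level data to cluster-period means; since the residual errors are i.i.d.\ and the random effect is shared within a cluster, these means are sufficient for the fixed effects, and their covariance matrix is block diagonal across clusters with each $J\times J$ block having the compound-symmetric form $(\sigma^2/n)\mathbf{I}_J+\tau^2\mathbf{J}_J$, where $\mathbf{J}_J$ is the all-ones matrix. Because every cluster in a given sequence $q$ shares the same treatment-indicator pattern $X_{qj}=I\{j>q\}$, I can further average over clusters within a sequence and work with the array of sequence-period means $\bar Y_{qj}$; this does not change the point estimate. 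Inverting the compound-symmetric block via Sherman--Morrison introduces a shrinkage factor that is a function of $\phi=\tau^2/(\tau^2+\sigma^2/n)$, which is why $\phi$ governs the final weights.

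With the GLS metric in hand, I would isolate the coefficient $\delta$ using the generalized Frisch--Waugh--Lovell theorem: $\hat\delta$ is the $\mathbf{V}^{-1}$-projection of $Y$ onto $X$ after partialling the nuisance columns (the intercept $\mu$ and the categorical time effects $\beta_2,\dots,\beta_J$) out of both $X$ and $Y$. The balance of the design makes this projection tractable in closed form, and collecting the resulting coefficient on each $\bar Y_{qj}$ yields the first displayed expression, with bracketed term $Q\,I\{j>q\}-j+1+\phi Q(2q-Q-1)/\big(2(1+\phi Q)\big)$ and the scalar normalizing constant out front. \textbf{I expect this step to be the main obstacle:} it requires carefully inverting and partialling a system whose dimension grows with $J$, and the clean cancellation that produces the simple weights linear in $q$ and $j$ hinges on delicately tracking the arithmetic-series sums that the balanced layout generates.

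Finally I would take the expectation. Under the data-generating model (\ref{eq_datagen}) we have $E[\bar Y_{qj}]=\mu+\beta_j+\delta(j-q)\,I\{j>q\}$, so writing $c_{qj}$ for the weight on $\bar Y_{qj}$ gives $E[\hat\delta]=\sum_{q,j}c_{qj}\big(\mu+\beta_j+\delta(j-q)I\{j>q\}\big)$. Since $\hat\delta$ is by construction an unbiased GLS estimator of $\delta$ in the IT model (\ref{eq_husseyhughes}), the weights must satisfy $\sum_{q,j}c_{qj}=0$ and $\sum_q c_{qj}=0$ for every $j$; these identities annihilate the $\mu$ and $\beta_j$ contributions and can be checked directly from the closed form if desired. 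What remains is $\sum_{q,j:\,j>q}c_{qj}\,\delta(j-q)$. Re-indexing by exposure time $s=j-q$ (so that for fixed $s$ the admissible sequences are $q=1,\dots,Q+1-s$) and summing the weights, which are linear in $q$, over this range is a routine arithmetic-series evaluation; simplifying the result produces the weight function $w(Q,\phi,s)$ in (\ref{eq_weightfunction}) and establishes (\ref{eq_it_expectation}).
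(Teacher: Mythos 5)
Your proposal follows essentially the same route as the paper's proof: collapse to sequence-period means $\bar Y_{qj}$ with compound-symmetric covariance, invert the block via Sherman--Morrison (introducing $\phi$), isolate $\hat\delta$ by partialling the intercept and categorical time effects out under the $V^{-1}$ metric (the step you flag as the main obstacle is exactly the one the paper handles by citing the closed-form ratio of quadratic forms from Matthews and Forbes), and then take expectations and re-index by exposure time $s=j-q$ with arithmetic-series sums. Your one departure --- invoking unbiasedness of $\hat\delta$ under the IT model to conclude $\sum_{q,j}c_{qj}=0$ and $\sum_q c_{qj}=0$ for each $j$, thereby annihilating the $\mu$ and $\beta_j$ contributions, rather than verifying these cancellations by explicit summation as the paper does --- is a valid and slightly cleaner shortcut, but not a different approach.
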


\begin{proof}
See Appendix \ref{app_proof}.
\end{proof}

The expectation $E[\hat{\delta}]$ is thus a weighted sum of the individual point treatment effects $\delta(1),...,\delta(J-1)$, and does not depend on the underlying study time trend. It is easily shown that $\sum_{s=1}^{J-1} w(Q,\phi,s) = 1$ for any choice of $(Q,\phi)$, which is necessarily the case since otherwise $\hat{\delta}$ would be biased when the IT model is correct. However, surprisingly and unfortunately, \textit{some of the weights can be negative}, depending on the values of $\phi$ and $Q$. Additionally, a simple calculation shows that $w(J-1)$, the weight corresponding to $\delta(J-1)$, will \textit{always} be negative (assuming $\phi>0$). Equation (\ref{eq_weightfunction}) is somewhat difficult to interpret in itself; the impact of this finding is perhaps best illustrated in Figure \ref{behavior}, which shows the effect curve estimated from model (\ref{eq_husseyhughes}) -- which is necessarily constant after the first exposure time point -- plotted against the true curve for four possible effect curves.\\

\begin{figure}[H]
\centerline{\includegraphics[width=6in]{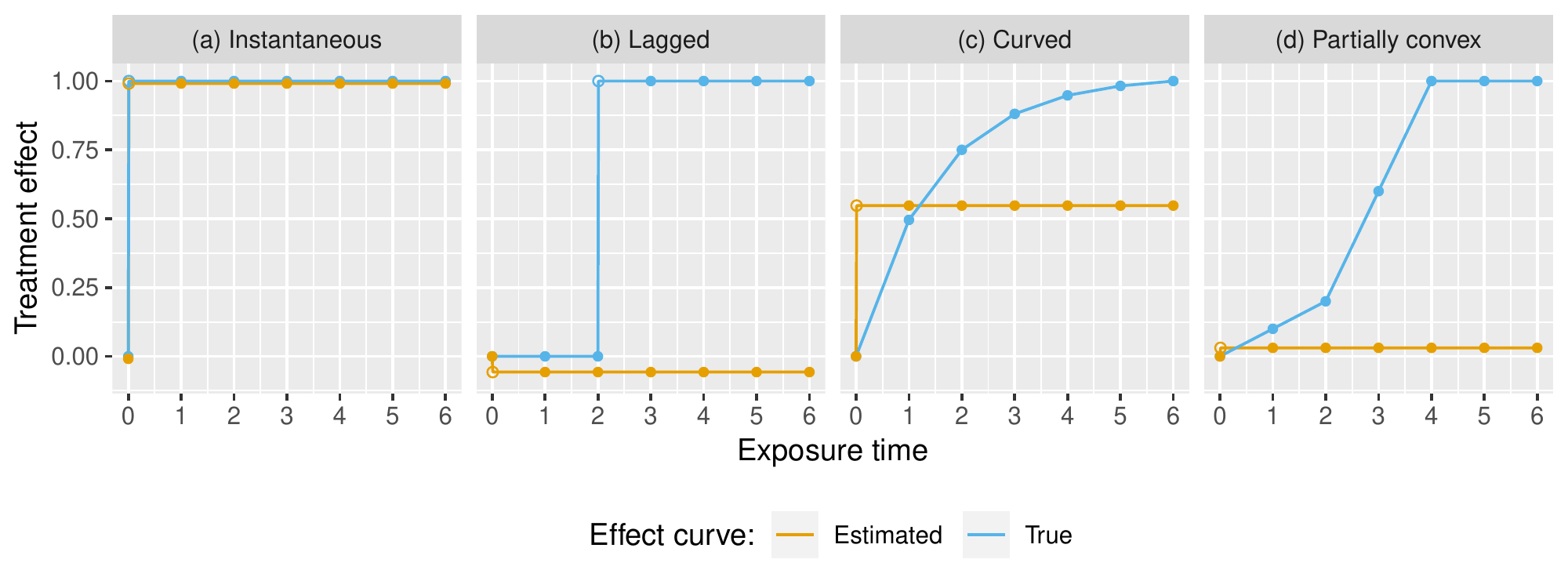}}
\caption{Four possible true effect curves plotted against the expected effect curves estimated from an IT model, for a design with $Q=6$ sequences and $\phi=0.5$\label{behavior}}
\end{figure}

Looking at panel (a) of Figure (\ref{behavior}), we see that when the assumption of an immediate treatment effect is correct, we correctly estimate the effect curve; this is expected, as weighted least squares estimators in linear mixed models are unbiased in general. However, the results from panels (b), (c), and (d) show that when the immediate treatment effect assumption is violated, the estimated effect curve can be astonishingly misleading. In panels (b) and (d), the estimated effect curve lies entirely below the true effect curve. Furthermore, we observe that even if each time-specific treatment effect is positive (or zero), the estimated treatment effect can actually converge to a negative value! This, in turn, implies that estimators of both the time-averaged treatment effect (over the study period) and the long-term effect can converge to a value of the opposite sign as the true value.\\

The value $\phi=0.5$ used to compute the estimates in Figure (\ref{behavior}) is not unreasonable for a stepped wedge design. As noted by Matthews and Forbes\cite{matthews2017stepped}, large numbers of individuals per cluster can lead to large values of $\phi$, even if the intraclass correlation coefficient (ICC) $\rho \equiv \tau^2/(\tau^2+\sigma^2)$ is small. For example, for an ICC of 0.05, $n=10$ leads to $\phi=0.34$ and $n=50$ leads to $\phi=0.72$. In Appendix \ref{app_corr}, we provide a figure showing how the weights vary as a function of $\phi$ for select values of $Q$, as implied by (\ref{eq_weightfunction}).\\

We also note that this counterintuitive behavior is not restricted to model (\ref{eq_husseyhughes}). In Appendix \ref{app_corr}, we conduct a numerical analysis to compute the weights corresponding to a model that includes the addition of a random time effect (i.e. a random intercept corresponding to a cluster-by-time interaction term, as in the Hooper/Girling model\cite{kasza2019impact}), such that the correlation matrix has a nested exchangeable structure.\cite{li2020mixed} We also perform the same analysis for a model involving a random treatment effect. For both alternative correlation structures, the resulting weights are qualitatively similar to those resulting from model (\ref{eq_husseyhughes}), in the sense that the largest weight is on $\delta(1)$ and some weights are negative. This analysis shows that the counterintuitive behavior of the IT model treatment effect estimator occurs across a wide range of models that are commonly used for analyzing stepped wedge trials.\\

Furthermore, this issue will also apply to models fit using generalized estimating equations (GEE). The results of Theorem \ref{thm_it} will apply to a GEE model with the fixed effects structure of (\ref{eq_husseyhughes}) and a working exchangeable correlation structure. If instead a working independence correlation structure is used, then instead of result (\ref{eq_it_expectation}), we will have $E[\hat{\delta}] = \sum_{s=1}^{J-1} w(Q,0,s) \delta(s)$ (i.e. the same result but with $\phi=0$). When $\phi=0$, there are no negative weights, but the weights will also not imply an estimand of direct scientific interest. For example, in a design with $Q=4$ sequences and $J=5$ time points, this will lead to $E[\hat{\delta}] = 0.6\delta(1) + 0.3\delta(2) + 0.1\delta(3) + 0.0\delta(4)$.\\

\section{Methods}

As demonstrated in section (\ref{sec_behavior}), a model that assumes an immediate treatment effect can lead to severely misleading inference if the treatment effect varies with exposure time. This necessitates the development of models that explicitly account for time-varying treatment effects. But before presenting these models, we define our estimands of interest more formally.\\

\subsection{Estimands of interest}

In this section, we formally introduce a model for the data-generating mechanism for the purpose of defining estimands of interest. Consider a standard stepped wedge design, as defined in section \ref{sec_behavior}. Let $Y_{ijk}(q)$ denote the potential outcome for individual $k \in \{1,...,K\}$ within cluster $i \in \{1,...,I\}$ at time $j \in \{1,...,J\}$, had this cluster been assigned to sequence $q \in \{1,...,J-1\}$, where again sequences are labeled such that in sequence $q$, the treatment is introduced at time point $j=q+1$. Also let $\mu_{ijk}(q)$ denote the expectation of $Y_{ijk}(q)$. We assume the following structural model for $\mu_{ijk}(q)$, an adaptation of the generic structural mixed model of Li et al.\cite{li2020mixed}:

\begin{equation}\label{eq_li}
	g \left( \mu_{ijk}(q) \right) = \overbrace{\Gamma_j}^\text{time trend} + \overbrace{\Delta_{ijq}}^\text{treatment effect} + \overbrace{C_{ijkq}}^\text{heterogeneity}
\end{equation}\\

Above, $g(\cdot)$ is a link function, and the outcome is assumed to fall within a parametric family with mean $\mu_{ijk}(q)$. $\Gamma_j$ represents the underlying study time trend and $\Delta_{ijq}$ represents the treatment effect. The $C_{ijkq}$ term accounts for the correlation structure of the data, and will generally take the form of one or more mean-zero random effect terms to capture cluster-level or temporal deviations from the fixed effects.\\

Let the random variable $Y_{ijk}$ denote the observed outcome. Denote the observed design matrix after randomization as $X \equiv \left\{ X_{ij} : i \in \{1,...,I\}, j \in \{1,...,J\} \right\}$, where $X_{ij}=1$ if cluster $i$ is is assigned to be in the treatment state at time $j$. If we assume that clusters are assigned to sequences randomly (exchangeability) and that $Y_{ijk}(q)=Y_{ijk}$ if cluster $i$ is randomly assigned to sequence $q$ (consistency), then the potential outcome expectation $\mu_{ijk}(q)$ is identified by $\mu_{ijk} \equiv E[Y_{ijk}|X_{ij}]$, and structural model (\ref{eq_li}) implies:

\begin{equation}\label{eq_li2}
	g \left( \mu_{ijk} \right) = \Gamma_j + \Delta_{ij}X_{ij} + C_{ijk}
\end{equation}\\

In the setting of randomized trials, these assumptions will typically hold. In equation (\ref{eq_li2}), the $q$ subscripts have been removed since they are determined by $i$ and $j$ given $X_{ij}$. Other works have focused on choices for the time trend $\Gamma_j$\cite{nickless2018mixed,hemming2017analysis,li2020mixed} and for the heterogeneity component $C_{ijk}$.\cite{li2020mixed,girling2016statistical,hooper2016sample,kasza2019impact,grantham2019accounting} In this paper, we restrict attention to modeling choices related to the treatment effect $\Delta_{ij}$. Furthermore, we assume that the treatment effect may vary as a function of exposure time, but \textit{not} as a function of study time. If $s_{ij}$ again represents the exposure time of cluster $i$ at time $j$, and $\delta$ is a fixed function, then this assumption allows us to simplify model (\ref{eq_li2}) as follows:

\begin{equation}\label{eq_eti}
	g \left( \mu_{ijk} \right) = \Gamma_j + \delta(s_{ij}) X_{ij} + C_{ijk}
\end{equation}\\

This model makes no assumptions about the form of the time trend or the heterogeneity, and is thus general enough to apply in a number of settings, such as when there are random time effects or random treatment effects.\\

In this context, referring to a single parameter as the ``treatment effect'' is ambiguous, since the treatment effect varies as a function of exposure time. Therefore, we must state more precisely what we are interested in estimating. In section \ref{sec_intro}, we heuristically introduced several estimands that we will now define more formally. First, we define the \textit{effect curve} as the function $s \mapsto \delta(s)$, where we think of time as continuous even though data collection for stepped wedge trials typically occurs at discrete time points. This function is an appealing estimand since it contains a wealth of information that can be used to guide program design and evaluation. However, in many applications it will be desirable to focus on a single parameter that summarizes this curve. The first summary estimand we consider is the time-averaged treatment effect (TATE) from exposure time $s_1$ to exposure time $s_2$, which is the average value of the effect curve over the interval $[s_1,s_2]$. We denote this estimand by $\Psi_{[s_1,s_2]}$ and formally define it as follows:

\begin{equation}\label{eq_tate}
	\Psi_{[s_1,s_2]} \equiv \frac{1}{s_2-s_1} \int_{s_1}^{s_2} \delta(s)ds
\end{equation}\\

For a given study that involves data collection at $J$ discrete time points, a common choice of the interval $[s_1,s_2]$ will be $[0,J-1]$, the maximum treatment period. A researcher may also choose to increase the lower endpoint if he or she is interested in the TATE after a washout period immediately following implementation of the treatment. Care should be taken when comparing TATE estimates from two different studies, even if they both examine the same intervention and use the same outcome measure, since the TATE is defined relative to a time interval, and these time intervals may differ between studies.\\

The second summary estimand we consider is the point treatment effect (PTE) at exposure time $s_0$, which is simply defined as the value $\Psi_{s_0} \equiv \delta(s_0)$. A common choice for $s_0$ might be $J-1$, the maximum treatment period. If we additionally assume that the effect curve ``flattens out'' by the end of the study (that is, we have that $\delta(s) = \delta(J-1)$ whenever $s \ge J-1$), $\Psi_{J-1}$ can be interpreted as the long-term treatment effect (LTE).\\

When the true effect curve has no delay or change over time (i.e. $\delta(s)$ is constant), we have that $\Psi_{[s_1,s_2]}=\Psi_{s_0}$ for all $s_0,s_1,s_2$. However, in the presence of a time-varying effect, these estimands do not in general coincide, and so in the context of a confirmatory trial, the researcher must specify one in advance as the primary endpoint. Others can be examined within secondary or exploratory analyses, if applicable. We expect that researchers will most often be interested in the TATE or the LTE as the target for confirmatory trials, whereas estimation of the effect curve $s \mapsto \delta(s)$ will usually be done as an exploratory analysis. Since different researchers will be interested in different estimands, we will consider estimation of all three in this paper.\\

\subsection{Analysis models}\label{sec_analysismodels}

Model (\ref{eq_eti}) was written generically so that the estimands are well-defined in a variety of settings. The analysis models we present in this section can be seen as special cases of (\ref{eq_eti}) that make assumptions about the treatment effect term $\delta(s)X_{ij}$. Implementation of these models in specific settings will involve making context-appropriate choices for the $\Gamma_j$ and $C_{ijk}$ terms. For example, we may choose to take $\Gamma_j = \mu + \beta_j$ and $C_{ijk} = \alpha_i$ with $\alpha_1,...,\alpha_I \overset{iid}{\sim} N(0,\tau^2)$ to yield a class of models containing the Hussey and Hughes model.\cite{hussey2007design} All models in this section can be used for the purposes of estimation and hypothesis testing related to both the TATE and the PTE/LTE.\\

\textbf{Immediate treatment (IT) model}

First, we consider the \textit{immediate treatment} (IT) model, a generalization of the model we studied in section \ref{sec_behavior}. This model assumes that full effect is reached within a single time step and does not change thereafter, as in Figure \ref{effect_curves}a; we provide a specification of this model in equation (\ref{eq_it}), where $\delta$ is a scalar parameter. Most mixed models that have previously been used to analyze data from stepped wedge trials, such as the Hussey and Hughes model\cite{hussey2007design} and the Hooper/Girling model\cite{kasza2019impact}, are special cases of the IT model.\\

\begin{equation}\label{eq_it}
\begin{aligned}
	g \left( \mu_{ijk} \right) &= \Gamma_j + \delta X_{ij} + C_{ijk} \\
	X_{ij} &= \begin{cases}
		1, & \text{cluster } i \text{ is in the treatment state at time } j \\
		0, & \text{otherwise}
	\end{cases}
\end{aligned}
\end{equation}\\

As noted, when the IT model is correct (i.e. there is no variation in the treatment effect over time), the TATE (for any $[s_1,s_2]$) and PTE (for any $s_0$) coincide, and so we can estimate both by fitting this model and using $\hat{\delta}$ as our estimator. When the IT model is correct, $\hat{\delta}$ is a more efficient estimator of the TATE or PTE than the other estimators we will consider. However, when the IT model is incorrect, $\hat{\delta}$ can give highly misleading results, as described in section \ref{sec_behavior}.\\

\textbf{Exposure time indicator (ETI) model}

Next, we consider a model that makes no assumptions about the shape of the effect curve, which we refer to as the \textit{exposure time indicator} (ETI) model. This is generalization of models considered in Granston et al.\cite{granston2014addressing} and Nickless et al.\cite{nickless2018mixed} and is specified by equation (\ref{eq_eti2}), where, with a slight abuse of notation, we use the subscript $s$ as shorthand for $s_{ij}$, the exposure time of cluster $i$ at time $j$.

\begin{equation}\label{eq_eti2}
\begin{aligned}
	g \left( \mu_{ijk} \right) &= \Gamma_j + \delta_s X_{ij} + C_{ijk}
\end{aligned}
\end{equation}\\

Model (\ref{eq_eti2}) is saturated with respect to exposure time; as such, it is the most flexible model we consider. After fitting this model and obtaining parameter estimates $\hat{\delta}_1,...,\hat{\delta}_{J-1}$, we can base estimation and inference on the following estimators, where we assume that $s_0$, $s_1$, and $s_2$ are all integers in the set $\{0,...,J-1\}$ with $s_1<s_2$.

\begin{equation}
\begin{aligned}\label{eq_est1}
	\hat{\Psi}_{[s_1,s_2]} &\equiv \frac{1}{s_2-s_1} \sum_{r=1}^{s_2-s_1} \hat{\delta}_{s_1+r} \\
	\hat{\Psi}_{s_0} &\equiv \hat{\delta}_{s_0}
\end{aligned}
\end{equation}\\

The estimator $\hat{\Psi}_{[s_1,s_2]}$ can be viewed as a right-hand Riemann sum that approximates the integral $\Psi_{[s_1,s_2]}$. Alternatively, a researcher can use an estimator based on a trapezoidal Riemann sum; this is discussed in See Appendix \ref{app_trapezoid}. The estimator $\hat{\Psi}_{J-1}$ can be used to estimate the LTE if it is assumed to exist, although as we will see, the variance of this estimator will typically be quite high. Additionally, the entire effect curve can be estimated as a linear spline or step function based on $\hat{\delta}_1,...,\hat{\delta}_{J-1}$.\\

A key assumption of this model is that the shape and height of the effect curve do not vary between clusters; this is a fairly strong assumption that we will later relax when we discuss random treatment effects in section \ref{sec_rte}. We also note that the IT model is a submodel of the ETI model, and so (likely in the context of an exploratory analysis), a researcher can conduct a likelihood ratio test to determine whether the IT model is an appropriate simplification of the ETI model for a given dataset.\\

Since $\hat{\Psi}_{[s_1,s_2]}$ is a linear combination of the parameter estimates $\hat{\delta}_1,...,\hat{\delta}_{J-1}$, variance estimation is straightforward. Assuming the $(J-1) \times 1$ matrix $\hat{\delta} \equiv (\hat{\delta}_1,...,\hat{\delta}_{J-1})^T$ is approximately multivariate Normal with estimated covariance matrix $\hat{V}$, we find the $1 \times (J-1)$ matrix $M$ such that $\hat{\Psi}_{[s_1,s_2]} = M \hat{\delta}$. Our estimated variance is then $\hat{\text{Var}}(\hat{\Psi}_{[s_1,s_2]}) \equiv M\hat{V}M'$, where $M'$ represents the transpose of $M$. For example, if we are estimating $\hat{\Psi}_{[0,J-1]}$, we have that $M = \left(\frac{1}{J-1},...,\frac{1}{J-1} \right)$ and $M\hat{V}M'=\frac{1}{(J-1)^2}\sum_{i=1}^{J-1}\sum_{j=1}^{J-1} \hat{V}_{ij}$.\\

\textbf{Restricted exposure time indicator (RETI) model}

Sometimes, a researcher may have contextual information about the shape of the effect curve that can be leveraged to achieve increased precision. In particular, it may be known (or assumed) that the effect curve ``flattens out'' after some time $s^*$. If we make this assumption, then we can use the following adaptation of the ETI model (\ref{eq_eti2}), which we refer to as the \textit{restricted exposure time indicator} (RETI) model. The IT model can be seen as the special case of the RETI model in which $s^*=0$. As we will see, this model will be particularly useful when the estimand of interest is the LTE.

\begin{equation}\label{eq_reti}
\begin{aligned}
	g \left( \mu_{ijk} \right) &= \Gamma_j + \delta_{\min\{s,s^*\}} X_{ij} + C_{ijk}
\end{aligned}
\end{equation}\\

Intuitively, this model pools the data for all time points greater than or equal to $s^*$ to increase precision. Smaller values of $s^*$ will lead to increased precision, but at the cost of stronger modeling assumptions. Once we have fit this model and obtained parameter estimates $\hat{\delta}_1,...,\hat{\delta}_{s^*}$, we can base estimation and inference on the following estimators:

\begin{equation}
\begin{aligned}\label{eq_est2}
	\hat{\Psi}_{[s_1,s_2]} &\equiv \begin{cases}
		\hat{\delta}_{s^*}, & s^* \le s_1 \\
		\frac{1}{s_2-s_1} \left( \sum_{r=1}^{s^*-s_1} \hat{\delta}_{s_1+r} +
		(s_2-s^*)\hat{\delta}_{s^*} \right), & s^* \in (s_1,s_2) \\
		\frac{1}{s_2-s_1} \sum_{r=1}^{s_2-s_1} \hat{\delta}_{s_1+r}, & s^* \ge s_2
	\end{cases}\\
	\hat{\Psi}_{s_0} &\equiv \hat{\delta}_{\min\{s_0,s^*\}}
\end{aligned}
\end{equation}\\

As in the ETI model, since $\hat{\Psi}_{[s_1,s_2]}$ is a linear combination of the parameter estimates, variance estimation is straightforward.\\

\textbf{Natural cubic spline (NCS) model}

If we are willing to assume some degree of smoothness of the effect curve, we may be able to construct more precise estimates. This suggests replacing the $\delta(s)X_{ij}$ term in (\ref{eq_eti}) with some sort of smoothing term. In this paper, we consider the use of natural cubic splines, although other approaches are possible. For an overview of spline-based methods, see Friedman et al.\cite{friedman2001elements} Briefly, given a set of real numbers $k_1<...<k_d$ called ``knots'', a cubic spline is a function that is equivalent to a cubic polynomial over any interval $[k_r,k_{r+1}]$ for $r \in \{1,...,d-1\}$. A natural cubic spline is further constrained such that it is continuous, twice continuously differentiable, and linear to the left of the first knot and to the right of the last knot. Through the construction of a so-called spline basis -- a set of functions $b_1,...,b_d$ that are applied to the variable of interest -- natural cubic splines can be embedded within the linear mixed model framework. Typically, a natural cubic spline with $d$ knots can be represented with $d$ basis functions. In our context, we enforce the additional constraint that the spline must pass through the origin. This leads us to consider the following model, in which the terms $b_1(s), ..., b_d(s)$ represent a natural cubic spline basis with $d$ degrees of freedom.

\begin{equation}\label{eq_ncs}
	g \left( \mu_{ijk} \right) = \Gamma_j + \left[ \omega_1 b_1(s) + ... + \omega_d b_d(s) \right] X_{ij} + C_{ijk}
\end{equation}\\

Note that the construction of the spline basis requires the user to specify the number and placement of the knots. A model that uses a basis with $J-1$ degrees of freedom will yield identical estimation and inference to the ETI model, and so if this model is used, it should be the case that $d<J-1$. Given that fitting this model yields an estimate of the entire function $s \mapsto \delta(s)$, we could in theory estimate the TATE via integration. However, for simplicity and to achieve consistency with the other models we consider, we again use a right-hand Riemann sum approximation. This allows us to conduct estimation and inference based on the estimates $\hat{\omega}_1,...,\hat{\omega}_d$ of $\omega_1,...,\omega_d$ and the associated covariance matrix estimate. This yields the following estimators, which are analogs of (\ref{eq_est1}):

\begin{equation}
\begin{aligned}\label{eq_est3}
	\hat{\Psi}_{[s_1,s_2]} &\equiv \frac{1}{s_2-s_1} \sum_{r=1}^{s_2-s_1} \sum_{k=1}^d \hat{\omega}_k b_k(s_1+r) \\
	\hat{\Psi}_{s_0} &\equiv \sum_{k=1}^d \hat{\omega}_k b_k(s_0)
\end{aligned}
\end{equation}\\

The spline basis and the estimators above are easily calculable with existing software, and variance estimation proceeds as before, since these estimators are also linear combinations of the parameter estimates.\\

\textbf{Monotone effect curve (MEC) model}

It will often be reasonable to assume that the effect curve is monotone; for example, more cardiovascular exercise sustained over a longer period of time will generally lead to more weight loss. When this assumption is  plausible, it is natural to wonder whether we can leverage this knowledge to obtain more precise estimates. Constructing models that enforce monotonicity can be done in a number of ways; we introduce one possible model here.

\begin{equation}\label{eq_mon}
\begin{aligned}
	g \left( \mu_{ijk} \right) &= \Gamma_j + \delta \left( \sum_{t=0}^{s} \alpha_t \right) X_{ij} + C_{ijk} \\
\end{aligned}
\end{equation}\\

Above, we let $\alpha_0 \equiv 0$ and constrain $(\alpha_1,...,\alpha_{J-1})$ as a simplex such that $\alpha_t \ge 0$ for $t \in \{1,...,J-1\}$ and $\sum_{t=1}^{J-1} \alpha_t=1$. This model parameterizes the effect curve as a monotonic step function, but allows for the step function to be either nondecreasing or nonincreasing. This is a constrained nonlinear model, and so more advanced techniques are needed to fit it and estimate the parameters $(\delta,\alpha_1,...,\alpha_{J-1})$. We choose to fit it as a hierarchical Bayesian model using the following prior specification, where $(c_1,...,c_{J-1})$ are fixed constants and $(\delta,\omega,\alpha_1,...,\alpha_{J-1})$ are parameters:

\begin{equation}\label{eq_mon2}
\begin{aligned}
	\delta &\sim \text{Normal}(0,10^4) \\
	\omega &\sim \text{Uniform}(0.01,100) \\
	(\alpha_1,...,\alpha_{J-1}) &\sim \text{Dirichlet}(c_1\omega,...,c_{J-1}\omega) \\
\end{aligned}
\end{equation}\\

Choosing some values of the constants $(c_1,...,c_{J-1})$ to be larger than the others can be seen as encoding the prior belief that the biggest ``jump'' in the effect curve will occur in a certain region of the effect curve, or equivalently that the curve will remain relatively flat over a certain region. For example, choosing $c_1=...=c_{(J-1)/2}=5$ and $c_{(J-1)/2+1}=...=c_{J-1}=1$ encodes the prior belief that the biggest jump in the effect curve occurs in the first half of the effect curve (assuming $(J-1)/2$ is an integer); this is the prior we use in this paper in the simulations and the real data analyses. Choosing $c_1=...=c_{J-1}=1$ leads to a symmetric Dirichlet prior, which can be seen as minimally informative in the sense that it doesn't presume in advance that the jump happens at any particular time point.\\

It is important to note that monotone-constrained function estimators tend to be biased at the endpoints\cite{shively2009bayesian}; in our case, the endpoint of the effect curve is precisely the LTE, if it is assumed to exist. The prior described above has the effect of counteracting this bias to an extent by stabilizing the tail of the effect curve. Thus, the results we obtain when using this model are influenced not just by the monotonicity constraint, but also by the informative prior. As we will see in both simulated and real data, different choices of prior may lead to very different estimates, and so we recommend that this model only be used in the context of an exploratory analysis.\\

In this model, the LTE estimator is simply the posterior mean $\hat{\delta}$. For the TATE, we again use an estimator based on a right-hand Riemann sum. This can be done by computing the posterior means $\hat{\alpha}_1,...,\hat{\alpha}_{J-1}$, calculating $\hat{\delta}_1,...,\hat{\delta}_{J-1}$ using the formula $\hat{\delta}_s \equiv \sum_{t=1}^s \hat{\alpha}_t$, and then calculating the estimators in (\ref{eq_est1}) or (\ref{eq_est1b}). Variance estimation can be done using standard methods for Bayesian inference.\\

\subsection{Incorporating random effects}\label{sec_rte}

In the analysis of stepped wedge trials using an immediate treatment model, it is common to include mean-zero random effect terms to capture cluster-level or temporal deviations from the fixed effects. For example, we may include random intercepts to model cluster-level deviations, and we may include random effects at the level of the cluster-by-time interaction (a type of ``random time effect'') to allow the underlying time trend to differ by cluster. These random effects can be incorporated into models that allow for a time-varying treatment effect in the same way they would be added to a model that assumes an immediate treatment effect. However, more thought is required in terms of incorporating ``random treatment effects'', which allow for the effect of the treatment to vary between clusters. When the treatment effect does not vary with time, this takes the form of a a random coefficient on the treatment indicator variable, indexed by cluster. When the effect of treatment varies by exposure time, there are multiple ways in which random effects could be used to allow the effect curve to differ across clusters. For simplicity, we focus on the case in which the only heterogeneity component other than the random treatment effect is a cluster-level random intercept. The specification given in (\ref{eq_eti2re}) allows for the ``height'' of the effect curve to vary between clusters and involves two additional parameters. Essentially, each of the parameters $(\delta_1,...,\delta_{J-1})$ represents the value of the effect curve at a particular exposure time averaged across clusters, and each of the parameters $(\eta_1,...,\eta_I)$ represents the amount by which the entire effect curve is vertically shifted for a given cluster. Other random effects structures that allow for both the ``height'' and the ``shape'' of the effect curve to vary between clusters are possible but may be difficult to fit due to their complexity.

\begin{equation}\label{eq_eti2re}
\begin{gathered}
	g \left( \mu_{ijk} \right) = \Gamma_j + (\delta_s+\eta_i) X_{ij} + \alpha_i \\
	\begin{pmatrix} \alpha_1 \\ \eta_1 \end{pmatrix} ,...,
	\begin{pmatrix} \alpha_I \\ \eta_I \end{pmatrix}
	\overset{iid}{\sim} N \left[
		\begin{pmatrix} 0 \\ 0 \end{pmatrix},
		\begin{pmatrix} \tau^2 & \rho \tau \nu \\ \rho \tau \nu & \nu^2 \end{pmatrix}
	\right]
\end{gathered}
\end{equation}\\

The RETI and NCS models can be adapted to include random treatment effects in an analogous manner.\\

\subsection{Implications for study design}\label{study_design}

In stepped wedge studies, data collection typically stops after all clusters have reached the treatment state, since additional data collection provides little additional information on the treatment effect when an immediate treatment model with a saturated study time effect is used for analysis. However, when a model is used that allows for time-varying treatment effects, additional data collection may lead to gains in precision, depending on the estimand of interest. We can gain intuition for why this is the case by considering estimation of the LTE. With a typical stepped wedge design, only one sequence is observed at exposure time $J-1$ (at study time $J$), and so if the ETI model is used, all the information about the change in the effect curve between times $J-1$ and $J-1$ must come from this sequence alone. However, if we collect additional data at study time step $J+1$, there are now two sequences that have been observed at exposure time $J-1$, and so we should be able to estimate the LTE more precisely, assuming we still think that the long-term effect is reached by exposure time $J-1$. Similar logic applies to estimation of the TATE, although we would not expect the gains in precision to be as large. For either estimand, the magnitude of this potential gain in precision can be roughly quantified via simulation.\\

\section{Simulation study}

We conducted a simulation study designed to assess the performance of the models described in section \ref{sec_analysismodels} in a variety of settings. Unless otherwise specified, data were generated according to the following model, a special case of (\ref{eq_eti}):

\begin{equation}
\begin{gathered}
	Y_{ijk} = \mu + \beta_j + \delta h(s) X_{ij} + \alpha_i + \epsilon_{ijk} \\
	(\alpha_1,...,\alpha_I) \overset{iid}{\sim} N(0,\tau^2) \\
	(\epsilon_{111},...,\epsilon_{IJK}) \overset{iid}{\sim} N(0,\sigma^2)
\end{gathered}
\end{equation}\\

The function $s \mapsto h(s)$ represents the effect curve, constrained to start at $(0,0)$ and achieve a maximum value of $1$ on the interval $[0,J-1]$. For all simulations, we generated data according to four different effect curves, defined as step function approximations to the effect curves a-d in Figure \ref{effect_curves}. This was done so that when we are comparing TATE estimates to the true values, we eliminate the component of the error due to the step function being an approximation to a smooth curve.\\

Simulations involved $I=24$ clusters, $J=7$ time points, and $K=20$ individuals per cluster. Parameters were fixed at $\mu=1$, $\delta=0.5$, $\sigma=2$, and $\tau=0.5$. This results in an ICC of $\rho=0.059$ and a value of $\phi$ (as defined in Theorem \ref{thm_it}) of $\phi=0.56$. The study design was balanced and complete, in the sense that there were an equal number of clusters in each sequence and one sequence crossed over at each time point. Data were generated with a linear time trend with $\beta_j = -0.5\left(\frac{j-1}{J-1}\right)$, although the analysis models treat time as categorical.\\

For the MEC model, we use a $\text{Dirichlet}(5\omega,5\omega,5\omega,\omega,\omega,\omega)$ prior. As discussed in section \ref{sec_analysismodels}, this can be seen as encoding the prior belief that the biggest ``jump'' in the effect curve will occur in the first three exposure time points, or equivalently that it will remain relatively flat after the last three exposure time points.\\

In all simulations, we considered estimation of $\Psi_{[0,J-1]}$, the TATE between exposure times $0$ and $J-1$, as well as the LTE $\Psi_{J-1}$, the value of the effect curve at time $J-1$. Performance was assessed by estimating bias, 95\% confidence interval coverage, and mean squared error (MSE). We also assessed estimation of the entire effect curve by calculating the average pointwise MSE, defined as $\frac{1}{J-1} \sum_{s=1}^{J-1} (\hat{\Psi}_s-\Psi_s)^2$, and estimated the power of Wald-type hypothesis tests. Each individual statistic was calculated using 1,000 simulation replicates. All simulations were conducted using the R programming language and structured using the \textit{SimEngine} simulation framework.\cite{kennysimengine2021}\\

\subsection{Performance of the IT, ETI, NCS, and MEC models}\label{sec_sim23}

First, we compare the performance of several of the models described above that are designed to account for a time-varying treatment effect, in addition to the IT model for comparison. Results are given in Figure \ref{sim2}. As expected, the IT model performs well when it is correct; simulation results confirm that the TATE/LTE estimate is unbiased and that nominal 95\% coverage is achieved. However when the IT model is not correct, estimates are severely biased and confidence interval coverage is unacceptable. The ETI model will always be correct in the sense that it makes no assumptions about the form of the effect curve. For this reason, it will also typically be the least efficient. We indeed observe that the ETI model leads to unbiased estimates of both the TATE and the LTE; the tiny amount of bias observed is due to Monte Carlo error. The NCS model performs similarly to the ETI model in terms of coverage and MSE, but appears to do slightly worse in terms of bias. The similar performance is not surprising in this case, as there are only six exposure time points in this simulation setup, and so the NCS model involves just two fewer degrees of freedom than the ETI model. The MEC model has the highest bias seen across effect curves but also consistently yields improved MSE for LTE estimation; this suggests that the model induces a bias-variance trade-off that may be beneficial depending on a researcher's goals. The MEC model performs similarly to the ETI model for TATE estimation. Note that ``coverage'' refers to credible interval coverage in the case of the MEC model.\\

\begin{figure}[H]
\centerline{\includegraphics[width=6in]{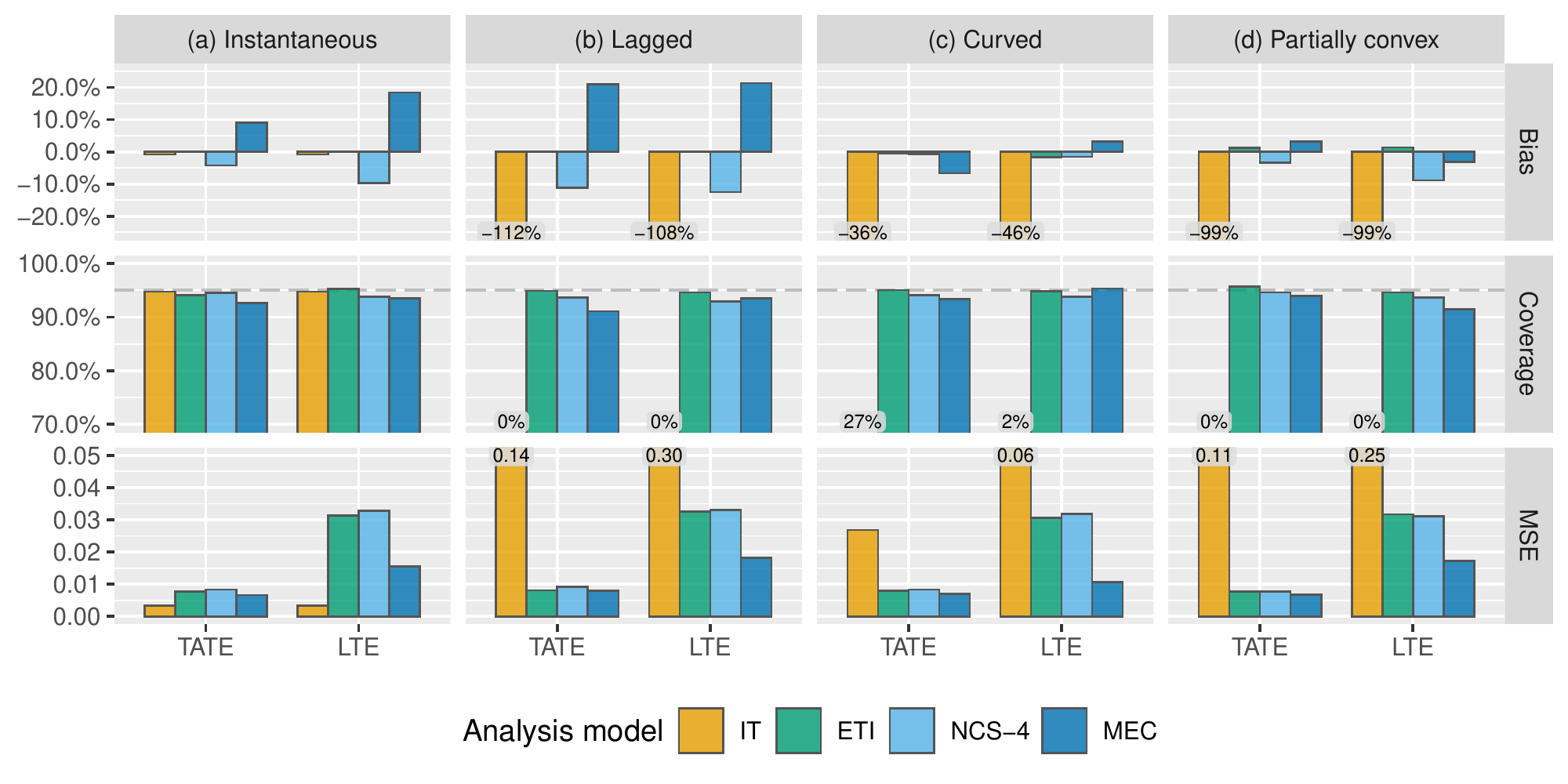}}
\caption{Simulation results: bias, coverage, and mean squared error (MSE) for the estimation of the TATE ($\Psi_{[0,J-1]}$) and LTE ($\Psi_{J-1}$) using the following four models: immediate treatment (IT), exposure time indicator (ETI), natural cubic spline with 4 degrees of freedom (NCS-4), monotone effect curve (MEC). Numeric values displayed over graph bars represent the height of the bars that are cut off because of the scale of the axes.}
\label{sim2}
\end{figure}

Figure \ref{sim3} shows the results of estimating the entire effect curve, assessed via average pointwise MSE, as described above. Overall, the ETI and NCS models perform similarly. The MEC model performs the best for three out of the four effect curves.\\

\begin{figure}[H]
\centerline{\includegraphics[width=6in]{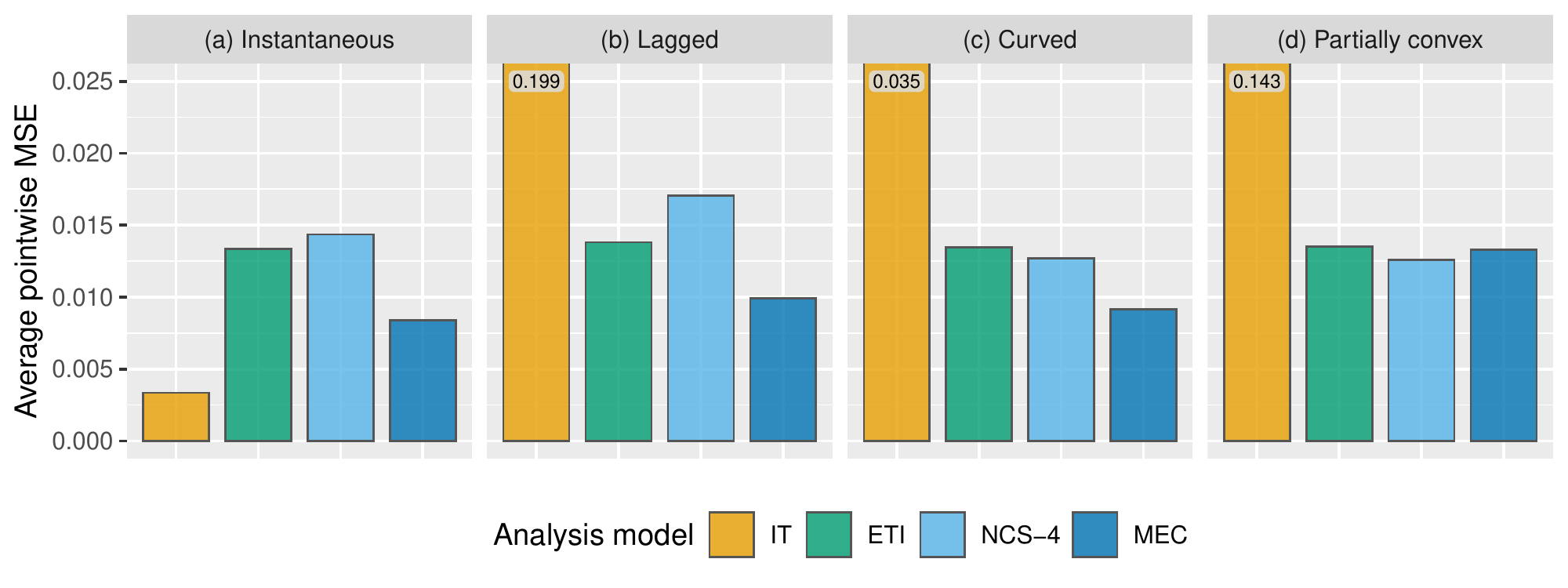}}
\caption{Simulation results: average pointwise mean squared error for the estimation of the entire effect curve using the following four models: immediate treatment (IT), exposure time indicator (ETI), natural cubic spline with 4 degrees of freedom (NCS-4), monotone effect curve (MEC). Numeric values displayed over graph bars represent the height of the bars that are cut off because of the scale of the axes.}
\label{sim3}
\end{figure}

Next, using the same set of models as above, we conduct a set of Wald-type hypothesis tests related to the TATE ($H_0 : \Psi_{[0,J-1]}=0$ vs. $H_1 : \Psi_{[0,J-1]} \ne 0$) and to the LTE ($H_0 : \Psi_{J-1}=0$ vs. $H_1 : \Psi_{J-1} \ne 0$). Results are given in Figure (\ref{sim4}). As expected, tests based on the IT model parameter estimates are the most powerful when the model is correct. When it is incorrect, the power of the IT model suffers dramatically for the lagged (b) and partially convex (d) effect curves. Importantly, this implies that if the IT model is used in the presence of a time-varying effect, there is a serious risk that hypothesis tests will suffer from high type II error rates even if a strong effect is present. The ETI and NCS models both perform reasonably well in all scenarios. All tests maintain proper type I error rates. However, because of the effects of model misspecificaion, there exist effect curves for which the IT model would lead to inflated type I error rates, although we do not demonstrate that here.\\

\begin{figure}[H]
\centerline{\includegraphics[width=6in]{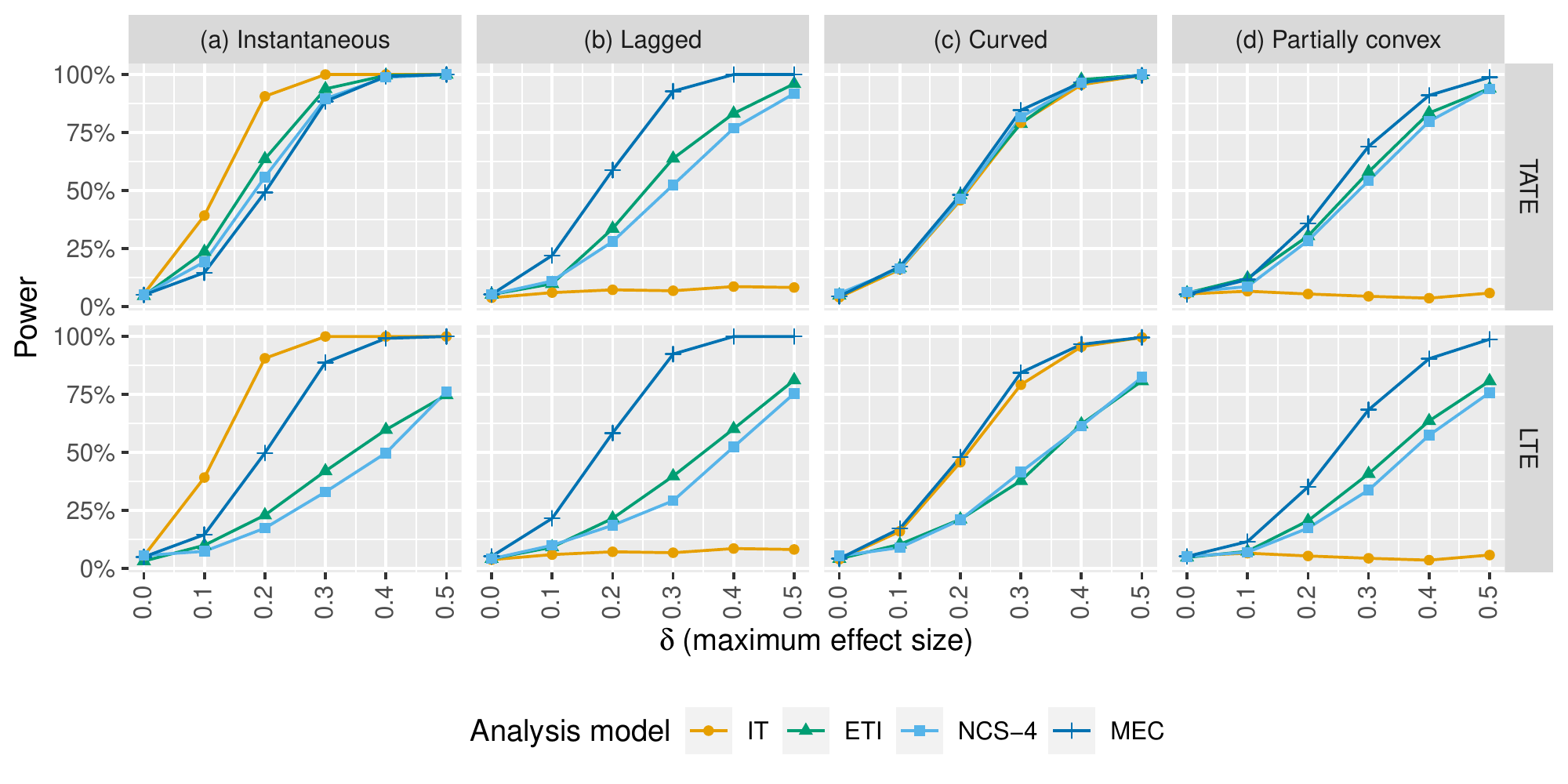}}
\caption{Simulation results: power of Wald-type hypothesis tests for testing null hypotheses related to the TATE ($\Psi_{[0,J-1]}=0$) and the LTE ($\Psi_{J-1}=0$) using the following four models: immediate treatment (IT), exposure time indicator (ETI), natural cubic spline with 4 degrees of freedom (NCS-4), monotone effect curve (MEC)}
\label{sim4}
\end{figure}

In Appendix \ref{app_additional_sims}, we compare the ETI model to an analogous model that additionally includes a random treatment effect term corresponding to the specification given in (\ref{eq_eti2re}). We show that the use of a random treatment effect in the analysis model is beneficial in the sense that performance does not suffer when no random treatment effect is present in the data, whereas if it is present in the data, performance improves considerably; this is consistent with previous findings.\cite{li2020mixed}\\

\subsection{Performance of RETI models}

Next, we test the RETI models, which involve the assumption that the effect curve ``flattens out'' after a certain number of time points, which we refer to as $s^*$. We test the models corresponding to $s^*=3$ and $s^*=4$. This model is correct for effect curves (a) and (b), which flatten out after 1 and 3 time points, respectively. Effect curve (c) becomes flatter at each time point, but does not reach maximum effect until exposure time $6$; this gives us the opportunity to assess the performance of the RETI models under slight violations of the flattening assumption. Effect curve (d) flattens out after 4 time points, and so for this curve, the model corresponding to $s^*=4$ is correct but the model corresponding to $s^*=3$ is not.\\

Results are given in Figure (\ref{sim5}). As expected, this model yields substantial gains in precision when the estimand of interest is the LTE. We also observe modest precision gains when estimating the TATE. Under violations of the flattening assumption, we still observe better performance in terms of MSE across all scenarios considered, although bias and coverage suffer correspondingly.\\

\begin{figure}[H]
\centerline{\includegraphics[width=6in]{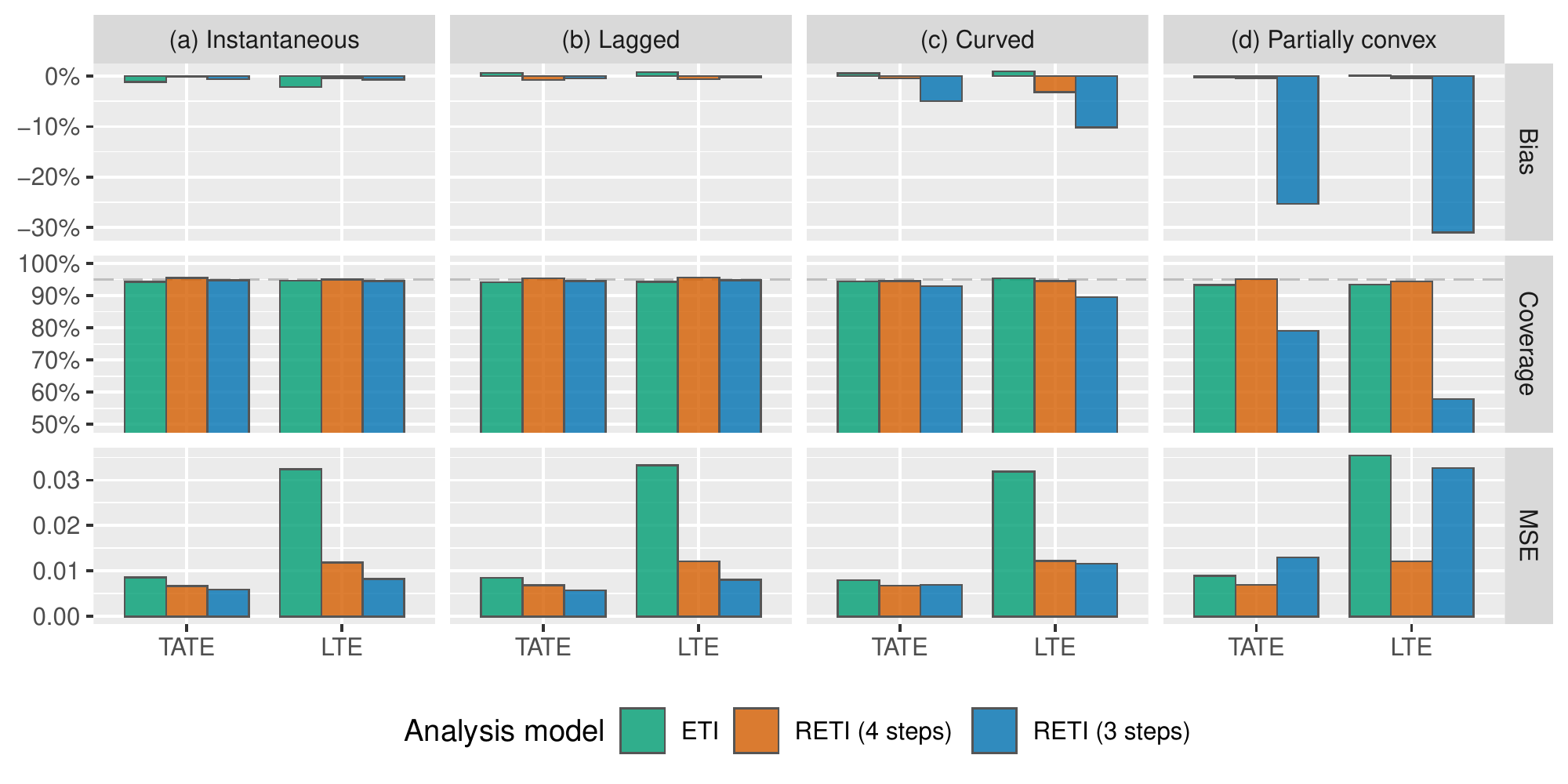}}
\caption{Simulation results: bias, coverage, and mean squared error (MSE) for the estimation of the TATE ($\Psi_{[0,J-1]}$) and LTE ($\Psi_{J-1}$) using the following three models: exposure time indicator (ETI), restricted exposure time indicator with flattening time $s^*=3$ (RETI-3), restricted exposure time indicator with flattening time $s^*=4$ (RETI-4)}
\label{sim5}
\end{figure}

\subsection{Effect of adding extra time points}

In the final set of simulations, we investigate the effect of adding additional data collection time points to the end of the study when estimating the TATE and LTE based on the ETI model. As discussed, when the IT model is used to analyze the data, it provides little benefit in terms of estimating the TATE or the LTE. However, when analyzing the data with one of the models that allows for time-varying treatment effects, precision gains may be possible, as discussed in section \ref{study_design}. Figure \ref{sim8} shows the results of simulations in which we added either zero, one, or two additional time points and analyze data using the ETI model. The effect curves used to generate the data are the same ones used in all previous simulations, but with $\Psi_{J-1}=\Psi_J=\Psi_{J+1}$; that is, the true effect curves remain flat after exposure time $J-1$. We continue to define the LTE as $\Psi_{J-1}$. We observe that adding extra time points enables considerable MSE gains in terms of estimating the LTE, but provides little benefit in terms of estimating the TATE. If the RETI model were used instead of the ETI model, the MSE gains would presumably be even greater.\\

\begin{figure}[H]
\centerline{\includegraphics[width=6in]{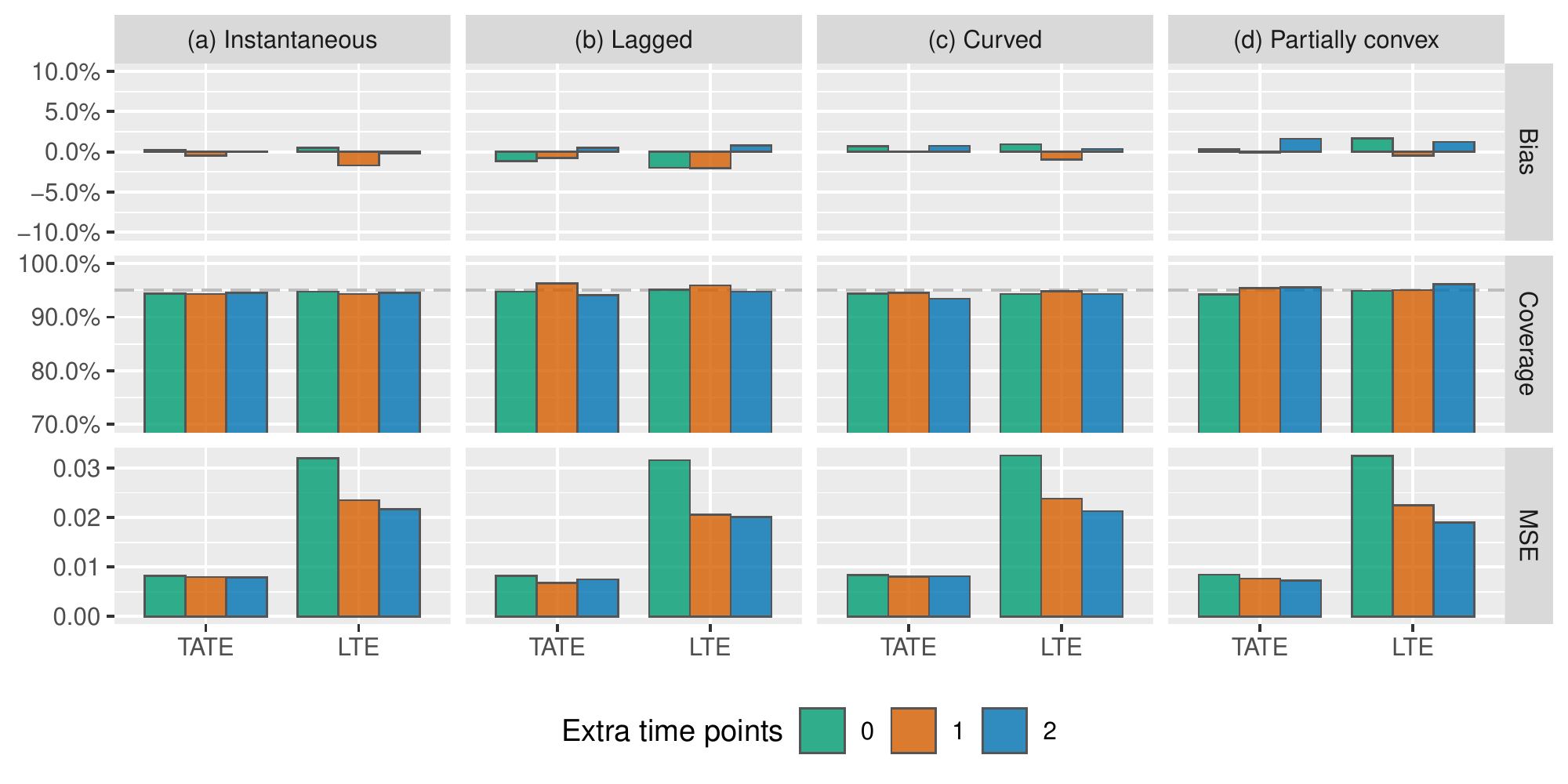}}
\caption{Simulation results: bias, coverage, and mean squared error (MSE) for the estimation of the TATE ($\Psi_{[0,J-1]}$) and LTE ($\Psi_{J-1}$) using the exposure time indicator (ETI) model with 0, 1, or 2 extra time points added to the end of the study}
\label{sim8}
\end{figure}

\section{Data analysis}

To illustrate the use of the models described in section \ref{sec_analysismodels}, we conducted secondary analyses of data from two different stepped wedge trials.\\

\subsection{Australia weekend services disinvestment trial}\label{real_disinv}

The first data analysis is of a stepped wedge trial examining the impact of the disinvestment (removal) of weekend health services from twelve hospital wards in Australia, previously analyzed by Haines et al.\cite{haines2017impact} Although the original investigators considered several outcomes, we focus on the the (log) length of hospital stay in days, treated as a continuous variable.\\

In our analysis, we fit six different models: immediate treatment (IT), exposure time indicator (ETI), exposure time indicator with a random treatment effect (ETI-RTE), restricted exposure time indicator with flattening time $s^*=3$ (RETI-3), natural cubic spline with four degrees of freedom (NCS-4), and monotone effect curve (MEC). For all models, study time is modeled as categorical. The MEC model uses the same prior that was used in simulations. Estimates of the TATE over the entire study ($\Psi_{[0,6]}$) and the LTE ($\Psi_{6}$) are given in Table \ref{table_real_disinv}, and estimates of the entire effect curve, along with pointwise confidence bands, are given in Figure \ref{real_disinv_tx}.\\

\begin{table}[h]
\centering
\begin{tabular}{ |c|c|c| } 
 \hline
 \textbf{Model} & \textbf{TATE (95\% CI)} & \textbf{LTE (95\% CI)} \\
 \hline
 IT & 0.11 (0.05--0.17) & 0.11 (0.05--0.17) \\
 ETI & 0.23 (0.13--0.33) & 0.29 (0.11--0.48) \\
 ETI-RTE & 0.27 (0.15--0.40) & 0.38 (0.15--0.61) \\
 RETI-3 & 0.19 (0.11--0.27) & 0.22 (0.13--0.31) \\
 NCS-4 & 0.22 (0.12--0.31) & 0.28 (0.10--0.45) \\
 MEC & 0.17 (0.09--0.26) & 0.23 (0.12--0.34) \\
 \hline
\end{tabular}
\caption{TATE and LTE estimates for Australia hospital}
\label{table_real_disinv}
\end{table}

\begin{figure}[H]
\centerline{\includegraphics[width=6in]{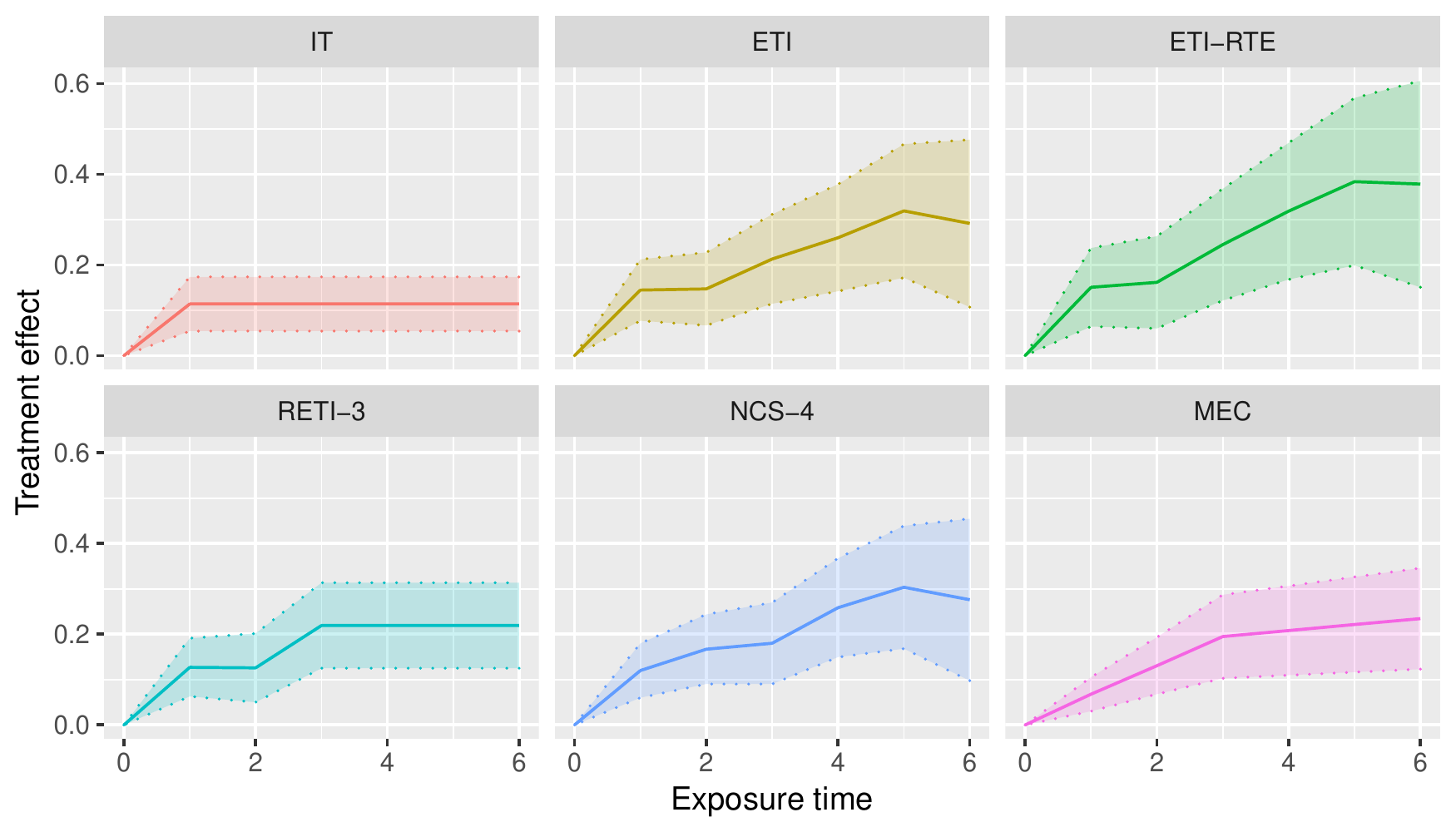}}
\caption{Estimation of the effect curve from the Australia disinvestment trial using the following six models: immediate treatment (IT), exposure time indicator (ETI), exposure time indicator with a random treatment effect (ETI-RTE), restricted exposure time indicator with flattening time $s^*=3$ (RETI-3), natural cubic spline with 4 degrees of freedom (NCS-4), monotone effect curve (MEC)}
\label{real_disinv_tx}
\end{figure}

Estimates of the TATE and LTE from the IT model are both much smaller than the corresponding estimates from all five other models. Additionally, the effect estimated from the IT model lies below the entire effect curve estimated from the ETI model. This is an example of the phenomenon described in section \ref{sec_behavior}. The effect curves estimated from the five models that account for a time-varying treatment effect are all qualitatively similar, which is reassuring. The confidence intervals resulting from the ETI model are the widest, which is as expected since this model makes no assumptions about the shape of the effect curve. The NCS model performs almost identically to the ETI model, which is not surprising for reasons discussed in section \ref{sec_sim23}. The ETI-RTE model yields higher TATE and LTE estimates than the ETI model, as well as wider confidence intervals; the wider CIs and nonzero variance component on the random treatment effect (not shown) suggest that the effect of treatment varies by cluster.\\

The RETI model was included for illustrative purposes even though we do not have a priori justification for the flattening assumption. As opposed to the ETI model, it leads to confidence intervals that are slightly narrower for TATE estimation and much narrower for LTE estimation, as expected.\\

The MEC model performs roughly similarly to the RETI model. As a sensitivity analysis, we also ran the MEC model with a (minimally informative) symmetric Dirichlet prior (i.e. $c_1=...=c_{J-1}=1$). This resulted in a similar TATE estimate of 0.18 (95\% CI: 0.09--0.28) but a much higher LTE estimate of 0.30 (95\%CI: 0.15--0.45), showing that the MEC model is indeed quite sensitive to prior selection.\\

\subsection{Washington State expedited partner treatment trial}\label{real_wa}

Next, we conducted a secondary analysis of data from the Washington State Community-Level Expedited Partner Treatment (EPT) Randomized Trial. \cite{golden2015uptake} This trial sought to test the effect of EPT, an intervention in which the sex partners of individuals with sexually transmitted infections are treated without medical evaluation, on rates of chlamydia and gonorrhea. In our analysis, we applied the same six models used for the first data analysis, but using a binomial GLM with a logit link and random intercepts for both cluster and site. Again, time was modeled as categorical. This study involved measurements at 15 study time points. The flattening point of $s^*=7$ chosen for the RETI model corresponds to the midpoint of the exposure time scale, as in the previous analysis, encoding the assumption that the effect curve flattens halfway through the trial. Estimates of the TATE over the entire study ($\Psi_{[0,14]}$) and the LTE ($\Psi_{14}$) are given in Table \ref{table_real_wa}, and estimates of the entire effect curve, along with pointwise confidence bands, are given in Figure \ref{real_wa_tx}.\\

\begin{table}[h]
\centering
\begin{tabular}{ |c|c|c| } 
 \hline
 \textbf{Model} & \textbf{TATE (95\% CI)} & \textbf{LTE (95\% CI)} \\
 \hline
 IT & 0.95 (0.86--1.04) & 0.95 (0.86--1.04) \\
 ETI & 1.17 (0.93--1.46) & 1.32 (0.55--3.18) \\
 ETI-RTE & 1.12 (0.90--1.41) & 1.25 (0.84--1.87) \\
 RETI-7 & 1.02 (0.87--1.18) & 1.05 (0.87--1.26) \\
 NCS-4 & 1.13 (0.93--1.39) & 1.35 (0.58--3.12) \\
 MEC & 1.01 (0.87--1.16) & 1.01 (0.82--1.24) \\
 \hline
\end{tabular}
\caption{TATE and LTE estimates for WA State EPT Trial (odds ratios)}
\label{table_real_wa}
\end{table}

\begin{figure}[H]
\centerline{\includegraphics[width=6in]{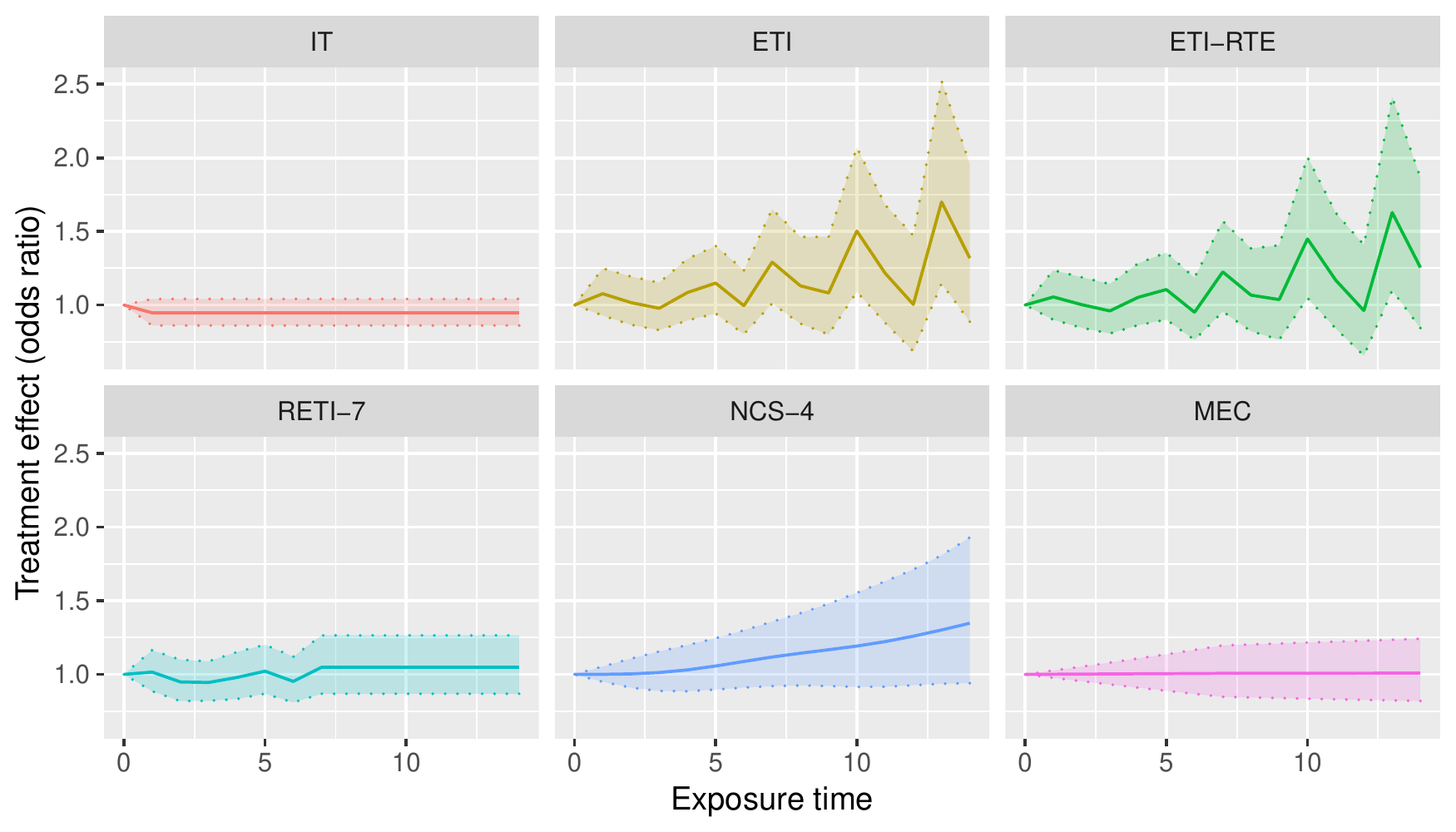}}
\caption{Estimation of the effect curve from the WA State EPT Trial using the following six models: immediate treatment (IT), exposure time indicator (ETI), exposure time indicator with a random treatment effect (ETI-RTE), restricted exposure time indicator with flattening time $s^*=7$ (RETI-7), natural cubic spline with 4 degrees of freedom (NCS-4), monotone effect curve (MEC)}
\label{real_wa_tx}
\end{figure}

In this analysis, we again observe that the effect estimated from the IT model lies entirely below the effect curve estimated from the ETI model. Furthermore, the trend is estimated in the opposite direction (odds ratio < 1) in the IT model relative to the models that account for a time-varying treatment effect (each odds ratio > 1). The IT estimate of the TATE is less than one, suggesting that the intervention has a small (nonsignificant) benefit, but the TATE estimates from the other models are all greater than one. However, all confidence intervals contain one (the null value) and so none of the associated Wald-type tests reject the null hypothesis of no effect, for both the TATE and the LTE.\\

The two ETI models (with and without a random treatment effect) yielded nearly identical estimation and inference, as the random treatment effect variance component was estimated to be close to zero (not shown), whereas these two models yielded different estimates in the previous dataset. We also see the smoothing behavior of the NCS model in action when comparing its estimated effect curve to that of the ETI model. As described earlier, this is expected, since with 14 time points there is a much greater difference in terms of the degrees of freedom between the ETI and NCS models. While one might expect the MEC model to yield an estimated effect curve similar to that of the NCS model, we do not observe this here; this is due to the effects of the prior. When we instead fit the MEC model with a symmetric Dirichlet prior (as we did as a sensitivity analysis for the previous dataset), we obtain a TATE estimate of 1.17 (95\% CI: 098--1.41) and an LTE estimate of 1.36 (95\%CI: 0.97--1.91), both of which are quite different than the results in Table \ref{table_real_wa}. These TATE and LTE estimates, as well as the estimate of the entire effect curve, are similar to those of the NCS model. Again, this demonstrates the sensitivity of the MEC model to the priors.\\

\section{Discussion}\label{sec_discussion}

In this paper, we characterize the risks of not accounting for a time-varying treatment effect in the context of a stepped wedge design, introduce terminology for describing estimands of interest, introduce several models that can be used for estimation and hypothesis testing, evaluate the operating characteristics of these models via simulation, and demonstrate their use on two real datasets.\\

\subsection{The immediate treatment (IT) model and estimands of interest}

If a time-varying treatment effect is present and the IT model is used, estimates of both the TATE and the LTE can be severely biased, confidence interval coverage can be unacceptably low, and MSE can be very high. As noted in section \ref{sec_behavior}, the expectation of the TATE/LTE estimate can be represented as a weighted sum of the individual point treatment effects $\delta(1),...,\delta(J-1)$, but with some negative weights. This implies that under certain effect curves, TATE and LTE estimates resulting from the use of the IT model can converge to a value of the opposite sign as the true parameters. This result is quite surprising, and although similar phenomena have been examined in the context of two-way fixed effects designs\cite{de2016double,callaway2020difference}, this has never previously been discussed in the stepped wedge literature. Because the impact of this form of model misspecification on estimation and inference is so severe, we recommend that the IT model should not be used moving forward for confirmatory analyses unless the assumption of an immediate treatment effect is justifiable based on contextual knowledge of the intervention. Even when the assumption is justifiable, a sensitivity analysis should be conducted using a model that allows for a time-varying treatment effect. Furthermore, it may be worth reanalyzing data from past stepped wedge trials in which the immediate treatment effect assumption is questionable.\\

In the case of an immediate treatment effect, the effect curve is constant over time, and so the phrase ``treatment effect'' is well-defined. However, when the effect curve changes over time, this phrase is ambiguous and researchers must take care to specify precisely what they are estimating. We defined the time-averaged treatment effect (TATE), the point treatment effect (PTE), and the long-term treatment effect (LTE) to distinguish between the different estimands a researcher may be interested in. In the case of the TATE, an alternative definition we considered was $(s_2-s_1)^{-1}\sum_{s=s_1+1}^{s_2}\delta(s)$, the average of the true point treatment effects at the study measurement times. We chose to not use this definition for two main reasons. First, doing so would imply that the estimand of interest is entangled with the study design, in the sense that the quantity and meaning of the $\delta(s)$ terms directly depends on the number of time steps in the study and the spacing between the steps. Second, since time is continuous in reality, we believe that an integral-based average is conceptually closer to what researchers will typically be interested in. However, the estimands may depend on the study design in practice, in the sense that possible choices for $(s_1,s_2)$ in $\Psi_{[s_1,s_2]}$ or $s_0$ in $\Psi_{s_0}$ are limited by the number and length of time steps in the study design. Similarly, the effect curve flattening assumption necessary for the LTE to be well-defined requires the study to be of a sufficient length to achieve flattening.\\

\subsection{Features unique to the setting of time-varying treatment effects}

For the models we consider that account for a time-varying treatment effect (other than the RETI model), it will generally be the case that the variance of the PTE increases with exposure time. Intuitively, this occurs because if there are $Q$ sequences in total, all $Q$ will be observed at exposure time 1, but only $Q-1$ will be observed at exposure time 2, only $Q-2$ will be observed at exposure time 3, and so on. However, this behavior does not apply to the RETI model (with flattening point $s^*$). With this model, although the variance of each PTE estimate $\Psi_{s_0}$ for $s_0<s^*$ will increase with increasing exposure time, the variance of the LTE estimate (which is equivalent to the PTE for any time $s_0 \ge s^*$) will generally be smaller, since it pools information from all exposure times greater than $s^*$. Thus, estimation of both $\Psi_{[s_1,s_2]}$ for $s_1>s^*$ and $\Psi_{s_0}$ for $s_0>s^*$ can be done with greater precision, as we saw in the simulation study.\\

We also note that some of the existing guidance for the design of stepped wedge trials requires modification when there is a time-varying treatment effect. Normally, data collection stops after all clusters are observed in the treatment state. However, in our simulations, we explored the potential for collecting data at additional time points and showed that this can lead to increased precision, particularly if the LTE is of interest. Additionally, care should be taken when considering designs involving different time step lengths. For example, consider a design in which all clusters are measured at baseline, at three months, and at eight months (in terms of study time). In this case, the first three measurements for sequence 1 will occur at exposure times 0, 3, and 8, whereas the first three measurements for sequence 2 will occur at exposure times 0, 0, and 5. This is not an issue for the IT model, but an ETI model will require more exposure time parameters (i.e. the $\delta_s$ terms), since each of these parameters represents the treatment effect at a specific exposure time. The NCS model is very appealing in this scenario, since it allows for the researcher to explicitly limit the number of model parameters; that is, the number of spline basis functions can remain constant even with a growing number of unique exposure time values.\\

\subsection{Which model to choose}

The ETI model is saturated with respect to exposure time, and thus makes no assumptions about the shape of the effect curve (assuming measurements are taken at regular intervals); in this sense, it is the most robust model that we consider. If the study time trend is correctly specified and a linear link function is used, the ETI model will always yield unbiased estimates of the PTE and the LTE, as well as (barring the error due to the use of a Riemann sum in approximating the effect curve) the TATE for any time interval. If instead a nonlinear link function is used, estimates will be unbiased if the heterogeneity model component is also correctly specified. However, the ETI model will also lead to estimators with the greatest variance. It will often be impractical to estimate the LTE using this model, since the information about the change in the effect curve between the second-to-last and last time points comes from just a single sequence. This is a problem not only in terms of variance, but also in terms of generalizability, especially if there are only a few clusters per sequence. This being said, we see that the LTE estimate in the Australia disinvestment trial is statistically significant for the ETI model (as well as for the other five models considered). We anticipate that the ETI model will be a practical choice in many applications for TATE estimation.\\

The RETI model, which assumes that the effect curve flattens out by exposure time $s^*$, is an attractive option when this assumption can be reliably made. When this assumption is correct, the RETI model leads to TATE and LTE estimators with considerable lower MSE. Furthermore, lower values of $s^*$ will generally lead to estimators with greater precision. Also, as noted in section \ref{sec_analysismodels}, the IT model can be seen as the extreme case of the RETI model in which $s^*=1$, and so the RETI model represents something of a compromise betwen the ETI model and the IT model. However, as with the IT model, violations of the flattening assumption can lead to substantial bias and undercoverage. If the RETI model is used, the analyst may choose to fit an ETI model as a sensitivity analysis.\\

With the NCS model, although we observe the smoothing effect in the WA State data analysis, we don't see evidence of improved estimation or inference in our simulation study. However, we still view this model as useful, particularly when the entire effect curve is of interest. Further research is needed to determine optimal knot number and placement. Also, as mentioned above, the NCS model is well-suited to trials in which the lengths of the time steps differ, since otherwise the number of exposure time parameters (i.e. the $\delta_s$ terms) would be very high, yielding TATE estimators with high variance.\\

We showed that the use of a random treatment effect is beneficial in the ETI model, but the same conclusion applies to the RETI, NCS, and MEC models. There is little penalty for including a random treatment effect if it is feasible to do so, since performance does not suffer when no random treatment effect is present in the data, but if it is present in the data we see a considerable performance improvement, particularly in terms of coverage.\\

Next, we discuss the MEC model. This model performs well in simulations for estimation of both the TATE and the LTE. However, as we demonstrated in both the simulations and the real data analyses, it is highly sensitive to the choice of prior. We observed that for both real data analyses, if we instead used a symmetric Dirichlet prior (i.e. $c_1=...=c_{J-1}=1$), our estimates shifted considerably. If we run the same set of simulations summarized \ref{sec_sim23} but with a symmetric Dirichlet prior instead of a $\text{Dirichlet}(5\omega,5\omega,5\omega,\omega,\omega,\omega)$ prior, we obtain estimates with greater bias, lower coverage, and higher MSE (see Appendix \ref{app_additional_sims}). As mentioned in section \ref{sec_analysismodels}, estimates obtained by this model are influenced not just by the monotonicity constraint, but also by the informative prior. In general, we see in both simulated and real data that the prior has a considerable influence on the resulting estimates, and as such, we recommend that this model is only used for exploratory analyses.\\

\subsection{Assumptions and limitations}

A critical assumption made throughout this work is that the treatment effect varies as a function of exposure time, but \textit{not} as a function of study time. While this assumption will hold in many scientific settings, researchers should carefully consider whether it holds in each particular study. A notable example of when this assumption would fail is when there is a major external shock (e.g. a natural disaster) that occurs in the middle of the study period that affects implementation of the intervention. Also, while all of the analysis models we consider use random effects to account for the correlation structure of the data, other approaches are possible.\cite{scott2017finite,kennedy2020novel,thompson2018robust,hughes2020robust} Most of these alternative approaches, such as GEE, differ mainly in terms of how the dependence between observations is handled. Thus, by making analogous modifications to the fixed effects structures, these approaches can be adapted to settings in which there is a time-varying treatment effect. However, we did not study these approaches in depth. Another limitation of this work is the set of simulation scenarios we considered was limited, and represents a small fraction of the set of possible data-generating mechanisms, in terms of true effect curves, ICC values, outcome variable types, covariance structures, etc.\\

\subsection{Future research directions}

We see a number of open questions that could be explored through future methodological work. First, the use of a time-varying treatment effect model will have a substantial impact on statistical power. If one makes assumptions about the shape of the effect curve, it should be straightforward to develop an equation to calculate power, at least for the ETI and RETI models. Second, the setting of a time-varying treatment effect may lead to additional implications on study design other than the possibility of adding additional time points. These implications will likely differ depending on the estimand of interest. The topic of model misspecification is also worthy of further exploration. In section \ref{sec_behavior}, we explored the behavior of the IT model under a time-varying treatment effect but focused on the Hussey and Hughes model; it would be informative to perform similar analyses for other other models (e.g. the behavior of the RETI model when the flattening assumption does not hold), correlation structures, and sampling schemes (e..g. cohort sampling instead of cross-sectional sampling). It would similarly be useful to examine how the models we study here perform when the treatment effect varies as a function of study time rather than exposure time, as well as to construct models that handle this scenario. Finally, it would be worth doing further research into alternative ways to estimate the time trend in the context of a time-varying treatment effect. Since moving from the IT model to the ETI model results in estimators with higher variance, imposing a more restrictive model for the time trend could potentially help to gain back some precision. For example, it would be straightforward to use a natural cubic spline (or other smoothing estimator) to estimate the time trend. Other approaches that have been considered include the use of linear\cite{hemming2017analysis} and quadratic\cite{nickless2018mixed} time trends, or the exclusion of time trends entirely for short trials.\cite{zhou2017cross} It would be useful to understand what precision gains are possible, as well as the costs associated with time trend misspecification.\\

\section{Conclusions}

If a stepped wedge trial is testing a treatment that varies as a function of exposure time, the use of a model that assumes an immediate treatment effect can lead to serious errors in both estimation and inference. We introduced several models that account for time-varying treatment effects and evaluated their operating characteristics via simulation. We recommend that a model that accounts for time-varying treatment effects is always used in stepped wedge trials moving forward unless the researcher has compelling evidence that the treatment effect is immediate.\\


\section*{Acknowledgments}

This publication was supported by the National Center For Advancing Translational Sciences of the National Institutes of Health under Award Number UL1 TR002319, as well as the National Institutes of Health under award number AI29168. The content is solely the responsibility of the authors and does not necessarily represent the official views of the National Institutes of Health.\\

\subsection*{Financial disclosure}

None reported.

\subsection*{Conflict of interest}

The authors declare no potential conflict of interests.\\

\section*{Supporting information}

A mock stepped wedge trial dataset and an R code file to run the models described in this paper are included as supplementary files. Additionally, all of the code for the simulation study can be found online at https://github.com/Avi-Kenny/Stepped.wedge.delay.\\

\subsection*{Data availability statement}

The data that support the findings of this study are available from the corresponding author upon reasonable request.\\

\appendix

\section{Behavior of the IT model under alternative correlation structures}\label{app_corr}

In section \ref{sec_behavior}, we investigated the behavior of the treatment effect estimator resulting from the IT model (specifically the Hussey and Hughes model), and showed that the expectation of this estimator can be expressed as a weighted sum of the individual point treatment effects $\{\delta(1),...,\delta(J-1)\}$. In this section, we provide additional results illustrating the behavior of these weights for several correlation structures. This analysis shows that the behavior of the IT model treatment effect estimator observed in section \ref{sec_behavior} is not restricted to the model in which heterogeneity is modeled as a single cluster-level random intercept.\\

\subsection{Exchangeable correlation structure}

First, we use equation (\ref{eq_weightfunction}) to show how the weights vary as a function of $\phi$ for select values of $Q$ under the assumption of an exchangeable correlation matrix, the structure implied by model (\ref{eq_husseyhughes}). Results are given in Figure \ref{weights}.

\begin{figure}[H]
\centerline{\includegraphics[width=6in]{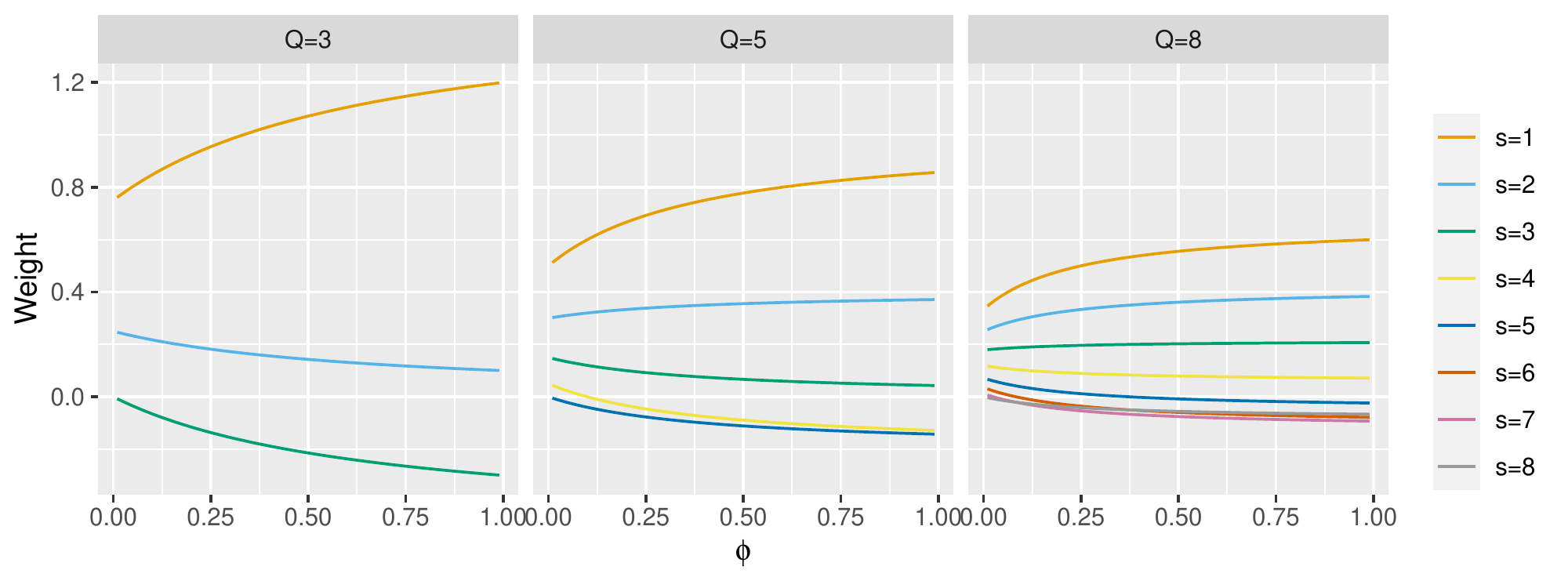}}
\caption{Weights $w(Q,\phi,s)$ plotted as a function of $\phi$ for $Q \in \{3,5,8\}$.\label{weights}}
\end{figure}

We observe that the weight $w(Q,\phi,1)$, which corresponds to the point treatment effect after a single time step, receives the most weight, regardless of the value of $Q$ or $\phi$. Additionally, in the $Q=8$ case, we see that half of the weights are negative as $\phi \rightarrow 1$. This implies that, under certain true effect curves, the expected value of the IT model treatment effect estimator may be of the opposite sign as the true TATE or LTE. As noted in section \ref{sec_behavior}, the value of $\phi$ can be quite large if the number of individuals per cluster is large, even if $\rho$ is small, and so we expect a broad range of $\phi$ values to be encountered in real trials.

\subsection{Correlation structure implied by a random time effect}

Next, we produce analogs of Figure \ref{weights} for several alternative correlation structures. Rather than deriving analytical results, we use Mathematica\cite{Mathematica} software to calculate treatment effects estimates as a function of $\{\delta(1),...,\delta(J-1)\}$ by computing the weighted least squares estimator. We consider a design involving three sequences with one cluster each and four time points.\\

We start with the correlation structure implied by a random time effect. This model is the same as the Hussey and Hughes model, but with the addition of a Normal random effect indexed by both cluster and time representing a unique cluster-by-time interaction. The resulting correlation matrix involves two parameters; $\rho_w$ represents the correlation between two observations within a cluster measured at the same time point and $\rho_b$ represents the correlation between two observations within a cluster measured at different time points. This structure is depicted in (\ref{eq_corr_random_time}), adapted from Li et al.\cite{li2020mixed}, for a trial involving three time points and two individuals per time point.

\begin{equation}\label{eq_corr_random_time}
	\left(
		\begin{array}{cc|cc|cc}
		1 & \rho_w & \rho_b & \rho_b & \rho_b & \rho_b \\
		\rho_w & 1 & \rho_b & \rho_b & \rho_b & \rho_b \\
		\hline
		\rho_b & \rho_b & 1 & \rho_w & \rho_b & \rho_b \\
		\rho_b & \rho_b & \rho_w & 1 & \rho_b & \rho_b \\
		\hline
		\rho_b & \rho_b & \rho_b & \rho_b & 1 & \rho_w \\
		\rho_b & \rho_b & \rho_b & \rho_b & \rho_w & 1
		\end{array}
	\right)
\end{equation}\\

The resulting weights are depicting in Figure \ref{weights2} for a design involving two individuals per cluster, for select values of $\rho_w$ and $\rho_b$.\\

\begin{figure}[H]
\centerline{\includegraphics[width=6in]{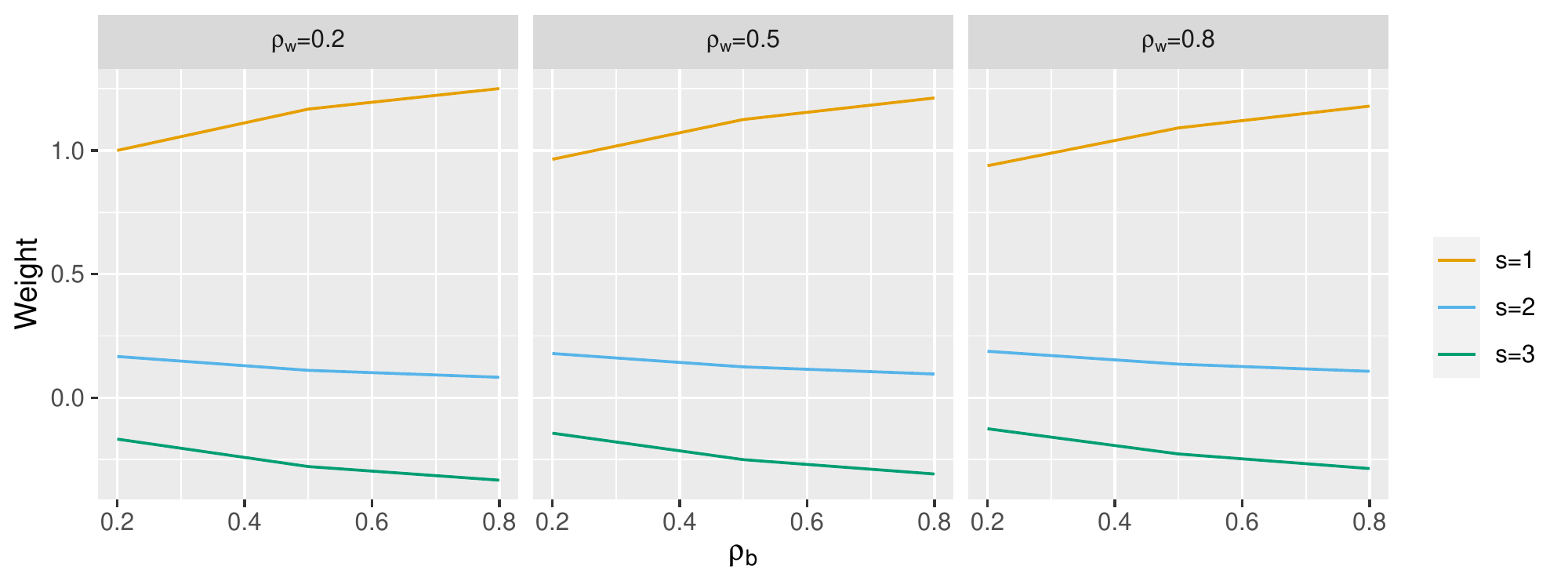}}
\caption{Weights plotted for select values of $\rho_w$ and $\rho_b$ for $Q=3$.\label{weights2}}
\end{figure}

\subsection{Correlation structure implied by a random treatment effect}

Next, we computed the weights for the correlation structure implied by a random treatment effect (we discussed random treatment effects in section \ref{sec_rte}). The resulting correlation matrix involves three parameters; $\rho_0$ represents the correlation between two observations in the control state, $\rho_1$ represents the correlation between two observations in the treatment state, and $\rho_10$ represents the correlation between two observations, one of which is in the control state and one of which is in the treatment state. The resulting weights are depicting in Figure \ref{weights3} for select values of $\rho_0$, $\rho_1$, and $\rho_{10}$.\\

\begin{figure}[H]
\centerline{\includegraphics[width=6in]{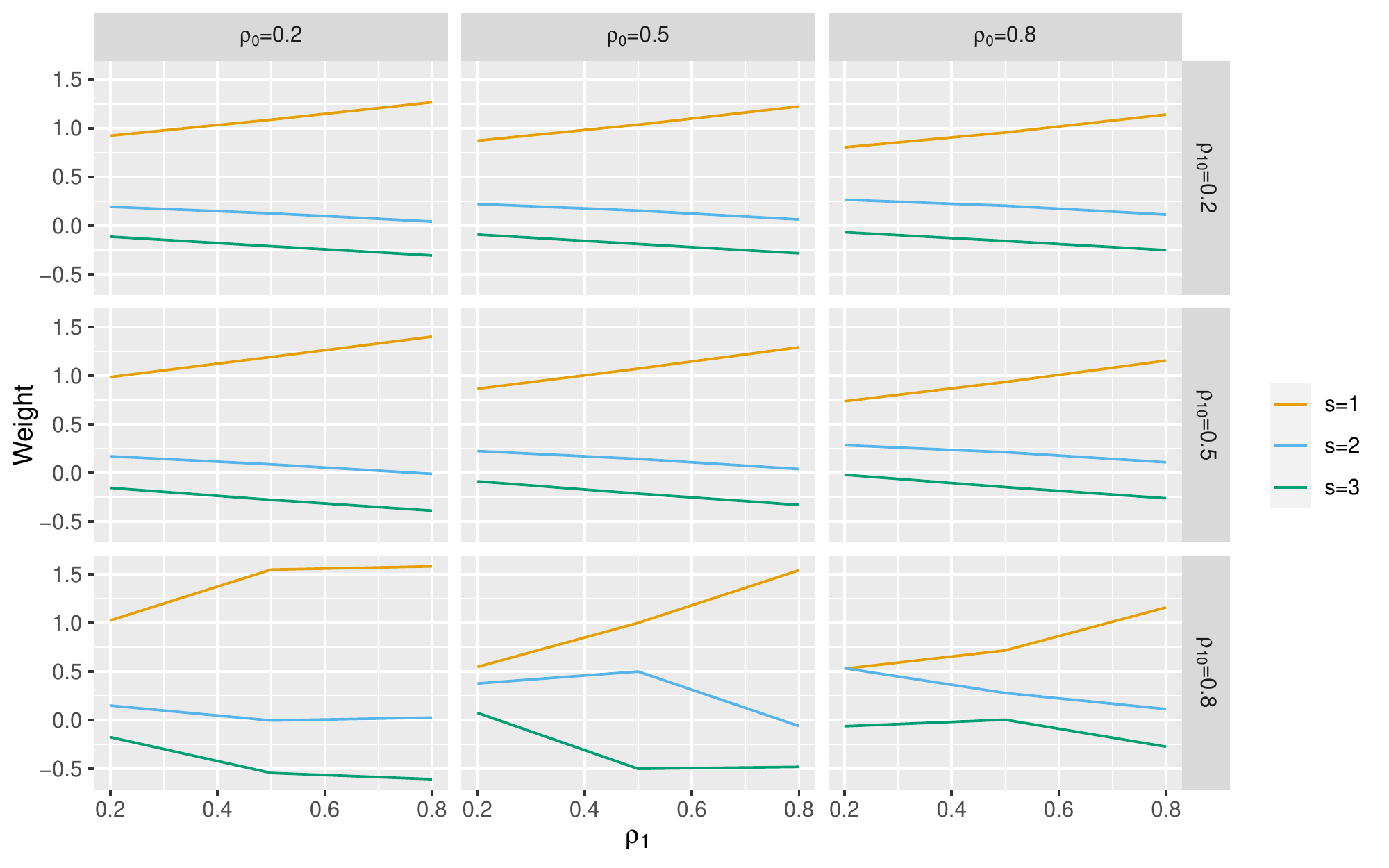}}
\caption{Weights plotted for select values of $\rho_0$, $\rho_1$, and $\rho_10$ for $Q=3$.\label{weights3}}
\end{figure}

We see that a similar pattern of weights holds for all three correlation structures considered. This indicates that the problem of IT model misspecification discussed in section \ref{sec_behavior} is not restricted to the Hussey and Hughes model, but is a general problem with models that assume an immediate treatment effect.\\

\section{Additional simulation results}\label{app_additional_sims}

\subsection{Performance of models in the presence of random treatment effects}\label{sec_sim_rte}

In the set of simulations summarized in Figure \ref{sim7}, data were generated according to the same model used for the simulations in section \ref{sec_sim23}, except additionally with a cluster-level random treatment effect generated according to (\ref{eq_eti2re}) with $\nu=1$ and $\rho=-0.2$. Data were analyzed using an ETI model that does not account for random treatment effects, and a second ETI model that implements the random treatment effect structure given in (\ref{eq_eti2re}); the latter is abbreviated as ``ETI-RTE''. The results are consistent with findings from previous studies on variance structures in the context of stepped wedge designs. When a random treatment effect is present in the data but we fail to account for it, we suffer considerably in terms of coverage and MSE.

\begin{figure}[H]
\centerline{\includegraphics[width=6in]{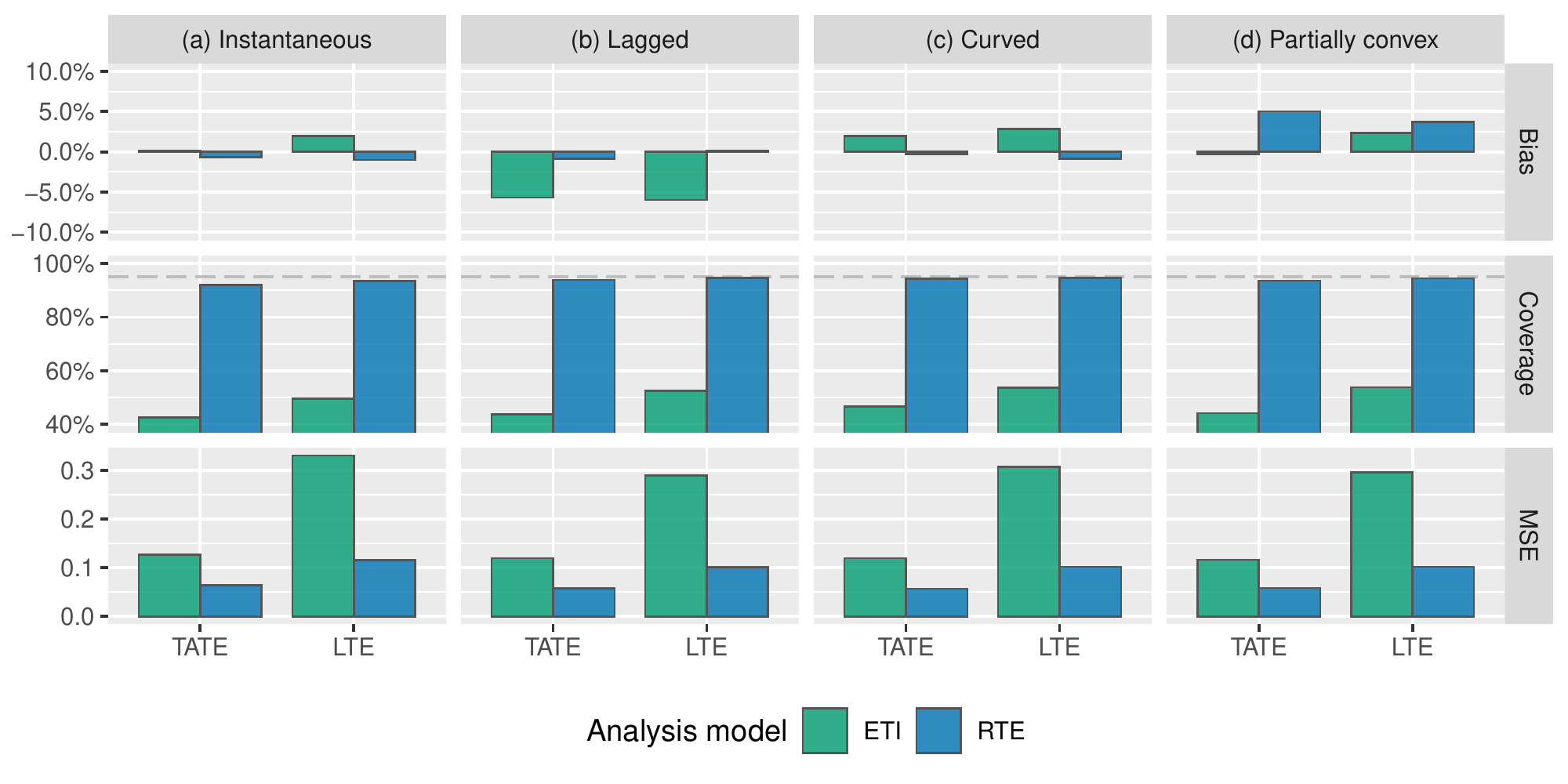}}
\caption{Simulation results: bias, coverage, and mean squared error (MSE) for the estimation of the TATE ($\Psi_{[0,J-1]}$) and LTE ($\Psi_{J-1}$) when data are generated with random treatment effects, using the exposure time indicator model (ETI) and the exposure time indicator model with a random treatment effect (ETI-RTE)}
\label{sim7}
\end{figure}

We also performed an analogous set of simulations in which data were instead generated without a random treatment effect (i.e. with $\nu=0$). In these simulations, the ETI and ETI-RTE models performed similarly across all three metrics; results are shown in Figure \ref{sim6}.\\

\begin{figure}[H]
\centerline{\includegraphics[width=6in]{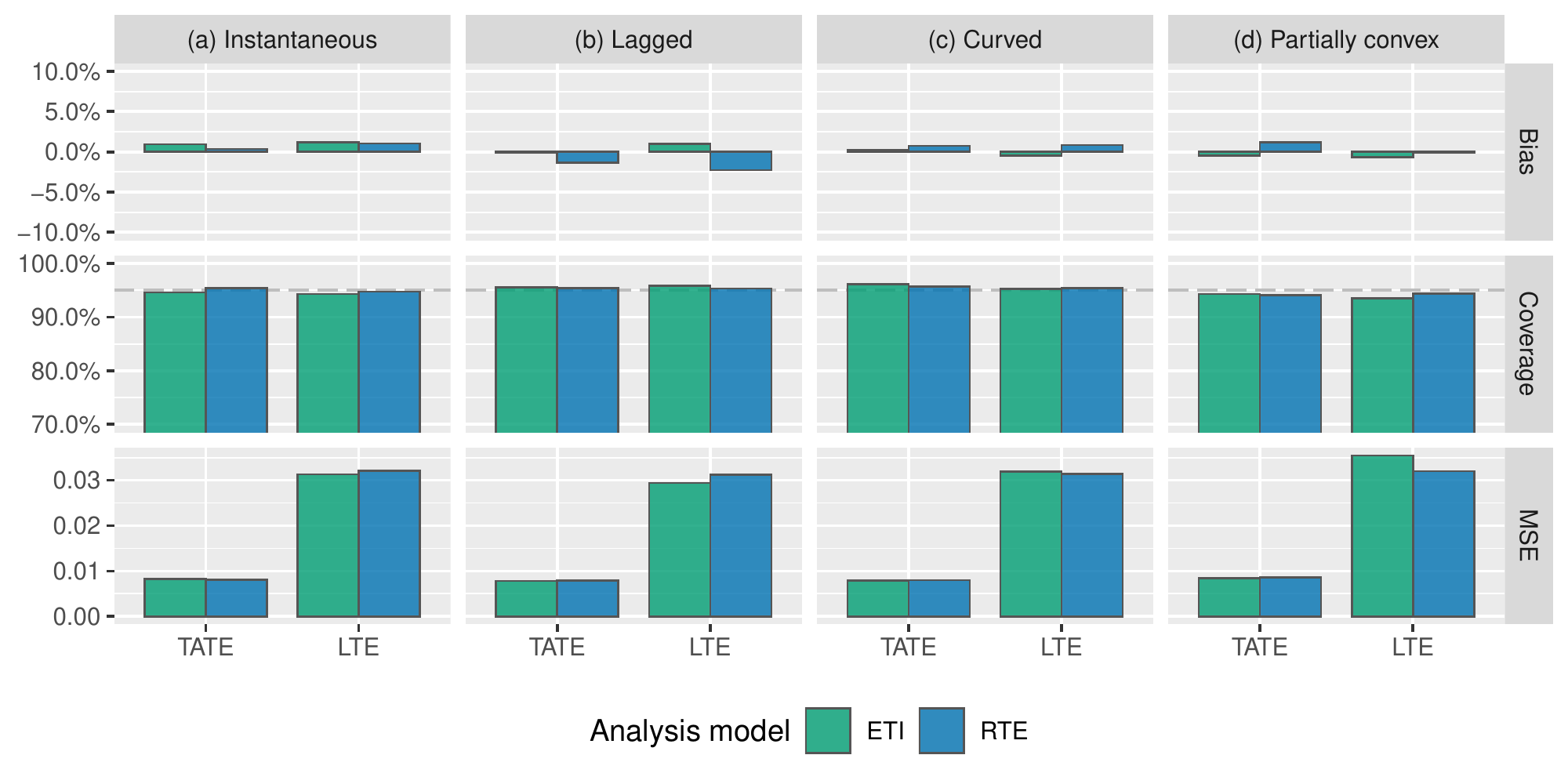}}
\caption{Simulation results: bias, coverage, and mean squared error (MSE) for the estimation of the TATE ($\Psi_{[0,J-1]}$) and LTE ($\Psi_{J-1}$) when data are generated without random treatment effects, using the exposure time indicator model (ETI) and the exposure time indicator model with a random treatment effect (ETI-RTE)}
\label{sim6}
\end{figure}

Together, these results suggest that random treatment effect terms should be used whenever possible, since very little is lost in terms of inferential performance when no random treatment effect is present in the data, whereas if it is present in the data we achieve much better performance.\\

\subsection{Use of a symmetric Dirichlet prior for the MEC model}

As noted in the discussion, if we use a symmetric Dirichlet prior (i.e. $\text{Dirichlet}(\omega,\omega,\omega,\omega,\omega,\omega)$) in simulations instead of a $\text{Dirichlet}(5\omega,5\omega,5\omega,\omega,\omega,\omega)$ prior, we see much different behavior of the MEC model. Estimates are more biased and undercoverage is generally worse. We also don't see gains in MSE that are as dramatic, and in some cases, MSE is worse in the MEC model than it is in the ETI model. Results are given in Figure \ref{sim2_symmetric_dir}.\\

\begin{figure}[H]
\centerline{\includegraphics[width=6in]{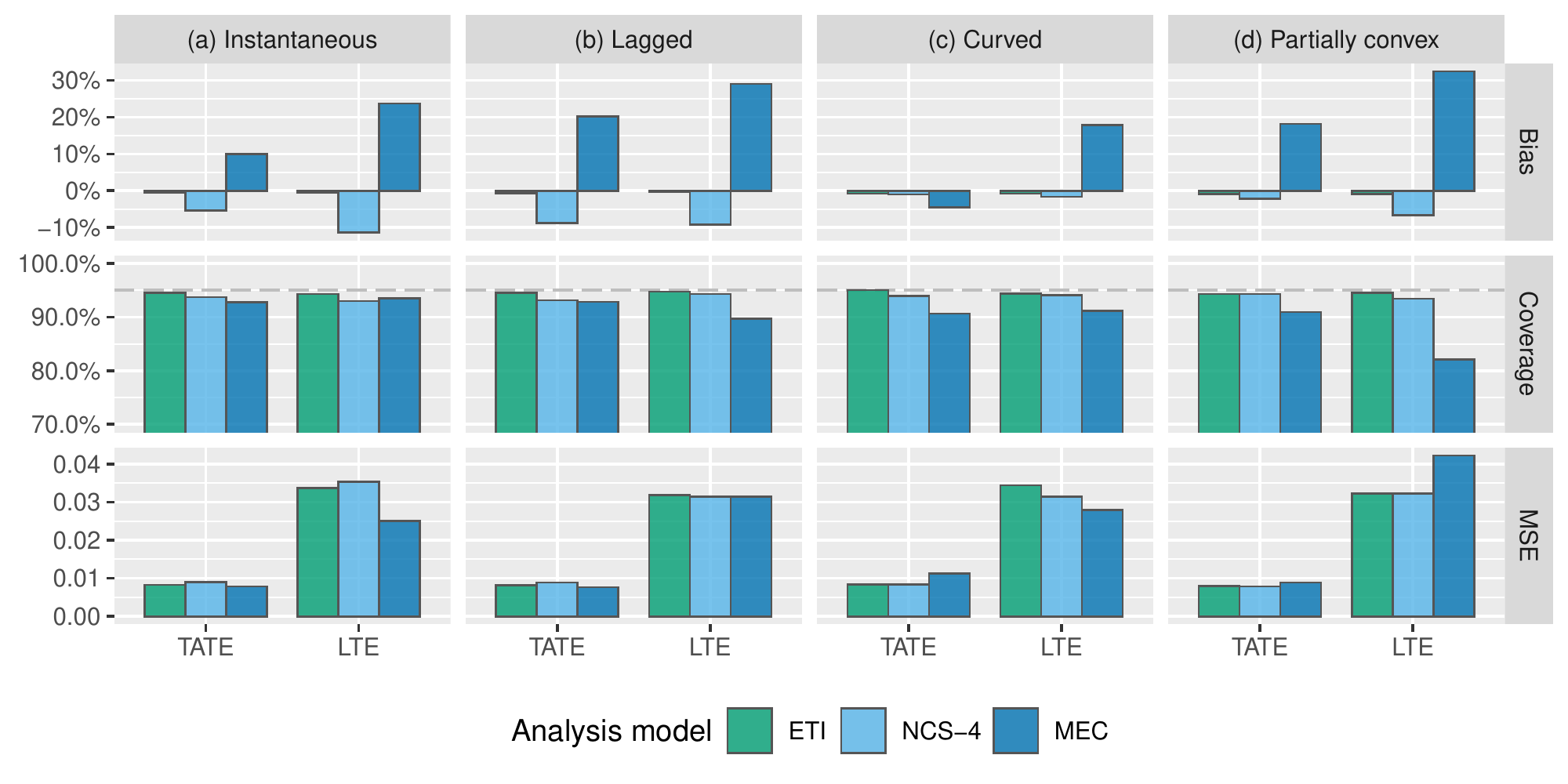}}
\caption{Simulation results: bias, coverage, and mean squared error (MSE) for the estimation of the TATE ($\Psi_{[0,J-1]}$) and LTE ($\Psi_{J-1}$) using the following four models: exposure time indicator (ETI), natural cubic spline with 4 degrees of freedom (NCS-4), monotone effect curve with a symmetric Dirichlet prior (MEC)}
\label{sim2_symmetric_dir}
\end{figure}

\section{Use of trapezoidal TATE estimators}\label{app_trapezoid}

As noted in section \ref{sec_analysismodels}, the time-averaged treatment effect (TATE) is an integral which must be approximated using measurements taken in discrete time. Throughout the paper, we chose to use TATE estimators based on right-hand Riemann sums, due to their simplicity and to facilitate more natural performance comparisons between different effect curves. However, it is also worth considering estimators based on trapezoidal Riemann sums. For the ETI model, the right-hand Riemann sum estimator was given by the following:

\begin{equation}\label{eq_est1_app}
	\hat{\Psi}_{[s_1,s_2]} \equiv \frac{1}{s_2-s_1} \sum_{r=1}^{s_2-s_1} \hat{\delta}_{s_1+r}
\end{equation}\\

The corresponding estimator based on a trapezoidal Riemann sum is given in (\ref{eq_est1b}) below, where for notational convenience we set $\hat{\delta}_0 \equiv 0$:

\begin{equation}\label{eq_est1b}
	\hat{\Psi}_{[s_1,s_2]}^* \equiv \frac{1}{s_2-s_1} \sum_{r=1}^{s_2-s_1} \frac{\hat{\delta}_{s_1+r-1}+\hat{\delta}_{s_1+r}}{2}
\end{equation}\\

It is informative to compare the variances of these two estimators for the case in which $s_1=0$. We can write the following:

\begin{equation}\label{eq_est1_variance}
\begin{aligned}
	\text{Var} \left( \hat{\Psi}_{[0,s_2]} \right) &=
	\frac{1}{s_2^2} \text{Var} \left( \sum_{r=1}^{s_2-1} \hat{\delta}_r + \hat{\delta}_{s_2} \right) &=
	\frac{1}{s_2^2} \left[
		\text{Var} \left( \sum_{r=1}^{s_2-1} \hat{\delta}_r \right) +
		\text{Var} \left( \hat{\delta}_{s_2} \right) +
		\text{2 Cov} \left( \sum_{r=1}^{s_2-1} \hat{\delta}_r , \hat{\delta}_{s_2} \right)
	\right] \\
	\text{Var} \left( \hat{\Psi}_{[0,s_2]}^* \right) &=
	\frac{1}{s_2^2} \text{Var} \left( \sum_{r=1}^{s_2-1} \hat{\delta}_r + \frac{1}{2} \hat{\delta}_{s_2} \right) &=
	\frac{1}{s_2^2} \left[
		\text{Var} \left( \sum_{r=1}^{s_2-1} \hat{\delta}_r \right) +
		\frac{1}{4} \text{Var} \left( \hat{\delta}_{s_2} \right) +
		\text{Cov} \left( \sum_{r=1}^{s_2-1} \hat{\delta}_r , \hat{\delta}_{s_2} \right)
	\right]
\end{aligned}
\end{equation}\\

Simulation results suggest that the covariance term in (\ref{eq_est1_variance}) is often positive, and so it is reasonable to expect the variance of the trapezoidal estimator to be greater than the variance of the right-hand Riemann estimator. This intuitively makes sense, since the two differ only by a factor of $\frac{1}{2}\hat{\delta}_{s_2}$, and $\hat{\delta}_{s_2}$ will have higher variance than any other $\hat{\delta}_t$ term for $t<s_2$. Therefore, it is worth considering use of the trapezoidal estimators if it is reasonable to expect the effect curve to be somewhat smooth. Although we focused here on the ETI model, similar trapezoidal estimators can be formed for the other models we considered in this paper.\\

\section{Proofs}\label{app_proof}

Suppose we have a standard stepped wedge design with $Q$ sequences and $J=Q+1$ time points. Consider the mixed model specified in (\ref{eq_it}) with $\Gamma_j = \mu + \beta_j$ and $C_{ijk} = \alpha_i$, where $\alpha_1,...,\alpha_I \overset{iid}{\sim} N(0,\tau^2)$. Also assume that, conditional on random effects, individual outcomes are independent and normally distributed with variance $\sigma^2$, and define $\phi \equiv \tau^2/(\tau^2+\sigma^2/n)$. From standard results in mixed model theory, we know that the weighted least squares estimator of the fixed effects $(\mu,\beta_2,...,\beta_J,\delta)$ is given by the following, where $X_q$ is the design matrix for an individual assigned to sequence $q$, $V$ is the covariance matrix for a single cluster, and $\bar{Y}_q \equiv (\bar{Y}_q1,...,\bar{Y}_qJ)'$ with $\bar{Y}_qj$ representing the mean responses among individuals in sequence $q$ at time $j$:

\begin{equation*}
	(\hat{\mu},\hat{\beta}_2,...,\hat{\beta}_J,\hat{\delta})' =
	\left( \sum_{q=1}^Q X_q' V^{-1} X_q \right)^{-1}
	\sum_{q=1}^Q X_q' V^{-1} \bar{Y}_q
\end{equation*}\\

Using the form of the covariance matrix $V \equiv \sigma^2((1-\phi)I_J+\phi \mathbb{1}_J)$ implied by the chosen mixed model, where $I_J$ represents a $J \times J$ identity matrix and $\mathbb{1}_J$ represents a $J \times J$ matrix of all ones, calculations done in Matthews and Forbes\cite{matthews2017stepped} show that the treatment effect estimator $\hat{\delta}$ can be expressed as follows:

\begin{equation*}
	\hat{\delta} = \frac{
      Q \sum_{q=1}^Q D_{1q}' V^{-1} Y_q - C_1' V^{-1} \bar{Y}_C
	}{
	  Q \sum_{q=1}^Q D_{1q}' V^{-1} D_{1q} - C_1' V^{-1} C_1
	}
\end{equation*}\\

Above, $\bar{Y}_C \equiv (\sum_{q=1}^Q \bar{Y}_{q1},...,\sum_{q=1}^Q \bar{Y}_{qJ})'$, $D_{1q}$ is the treatment allocation vector for sequence $q$, and $C_1$ is the vector $(0,1,...,Q)'$.\\

We proceed by analyzing each of the four terms $\sum_{q=1}^Q D_{1q}' V^{-1} Y_q$, $C_1' V^{-1} \bar{Y}_C$, $\sum_{q=1}^Q D_{1q}' V^{-1} D_{1q}$, and $C_1' V^{-1} C_1$ separately.\\

We start with the first equality and note the following, where we use the fact that $V^{-1}=\frac{1}{1-\phi} I_J - \frac{\phi}{(1-\phi)(1+J\phi-\phi)}\mathbb{1}_J$ and where $(0,...,0,1,...,1)$ represents a vector of $q$ zeros followed by $J-q$ ones:

\begin{equation*}
\begin{gathered}
	D_{1q}' V^{-1} \bar{Y}_q
	= \left[ \frac{1}{1-\phi}(0,...,0,1,...,1) - \frac{\phi(Q-q+1)}{(1-\phi)(1+J\phi - \phi)}(1,...,1,1,...,1) \right]
	(\bar{Y}_{q1},...,\bar{Y}_{qJ})' \\
	= \frac{1}{1-\phi} \sum_{j=1}^J \left[ I\{j>q\} - \frac{\phi(Q-q+1)}{1+\phi Q} \right]\bar{Y}_{qj}
\end{gathered}
\end{equation*}\\

Above (and throughout), we use the simple fact that $Q+1=J$. The equation above allows us to write the following:

\begin{equation*}
	\sum_{q=1}^Q D_{1q}' V^{-1} \bar{Y}_q
	= \frac{1}{1-\phi} \sum_{q=1}^Q \sum_{j=1}^J \left[ I\{j>q\} - \frac{\phi(Q-q+1)}{1+\phi Q} \right] \bar{Y}_{qj}
\end{equation*}\\

Next, we consider the second term:

\begin{equation*}
\begin{gathered}
	C_1' V^{-1} \bar{Y}_C
	= (0,1,...,Q) \left(\frac{1}{1-\phi} I_J - \frac{\phi}{(1-\phi)(1+J\phi-\phi)}\mathbb{1}_J \right)\bar{Y}_C \\
	= \left[ \frac{1}{1-\phi}(0,1,...,Q) - \frac{\phi Q(Q+1)}{2(1-Q)(1+J\phi - \phi)}(1,...,1) \right] \bar{Y}_C \\
	= \frac{1}{1-\phi} \left( 0-\frac{\phi Q(Q+1)}{2(1+Q\phi)}, 1-\frac{\phi Q(Q+1)}{2(1+Q\phi)}, ..., Q-\frac{\phi Q(Q+1)}{2(1+Q\phi)} \right) \left( \sum_{q=1}^Q \bar{Y}_{q1}, ..., \sum_{q=1}^Q \bar{Y}_{qJ} \right)' \\
	= \frac{1}{1-\phi} \sum_{q=1}^Q \sum_{j=1}^J \left[ j-1-\frac{\phi Q(Q+1)}{2(1+Q\phi)} \right] \bar{Y}_{qj}
\end{gathered}
\end{equation*}\\

We proceed with the third equality and note the following:

\begin{equation*}
\begin{gathered}
	D_{1q} V^{-1} D_{1q}
	= \left[ \frac{1}{1-\phi}(0,...,0,1,...,1) - \frac{\phi(Q-q+1)}{(1-\phi)(1+J\phi-\phi)}(1,...,1,1,...,1)  \right] (0,...,0,1,...,1)' \\
	= \frac{J-q}{1-\phi} \left[ 1-\frac{\phi(Q-q+1)}{1+J\phi-\phi} \right] \\
	= \frac{(Q-q+1)(1+q\phi-\phi)}{(1-\phi)(1+Q\phi)}
\end{gathered}
\end{equation*}\\

This allows us to express the sum as follows, where we use the identities $\sum_{i=1}^n i = \frac{n(n+1)}{2}$ and $\sum_{i=1}^n i^2 = \frac{n(n+1)(2n+1)}{6}$:

\begin{equation*}
\begin{gathered}
	\sum_{q=1}^Q D_{1q} V^{-1} D_{1q}
	= \frac{1}{(1-\phi)(1+Q\phi)} \sum_{q=1}^Q (Q-q+1)(1+q\phi-\phi) \\
	= \frac{1}{(1-\phi)(1+Q\phi)} \left[ \sum_{q=1}^Q (1-\phi(1+Q)+Q) + ((Q+2)\phi-1)\sum_{q=1}^Q q - \phi \sum_{q=1}^Q q^2 \right] \\
	= \frac{1}{(1-\phi)(1+Q\phi)} \left[ Q(1-\phi(1+Q)+Q) + ((Q+2)\phi-1) \frac{Q(Q+1)}{2} - \phi\frac{Q(Q+1)(2Q+1)}{6} \right] \\
	= \frac{Q(Q+1)}{(1-\phi)(1+Q\phi)} \left[ \frac{1}{2} + \frac{\phi(Q-1)}{6} \right]
\end{gathered}
\end{equation*}\\

Finally, the fourth term can be expressed as follows, where we again use the summation identities that we used to analyze the third term:

\begin{equation*}
\begin{gathered}
	C_1' V^{-1} C_1 =
	\frac{1}{1-\phi} \left( 0-\frac{\phi Q(Q+1)}{2(1+Q\phi)}, 1-\frac{\phi Q(Q+1)}{2(1+Q\phi)}, ..., Q-\frac{\phi Q(Q+1)}{2(1+Q\phi)} \right) (0,1,...,Q)' \\
	= \frac{1}{1-\phi} \sum_{q=0}^Q q \left[ q-\frac{\phi Q(Q+1)}{2(1+Q\phi)} \right] \\
	= \frac{1}{1-\phi} \left[ \sum_{q=0}^Q q^2 - \frac{\phi Q(Q+1)}{2(1+Q\phi)} \sum_{q=0}^Q q  \right] \\
	= \frac{Q(Q+1)}{1-\phi} \left[ \frac{\phi Q(Q-1) + 4Q + 2}{12(1+Q\phi)} \right]
\end{gathered}
\end{equation*}\\

Putting these four equalities together, we have the following result:

\begin{equation*}
	\hat{\delta} = \frac{12(1+\phi Q)}{Q(Q+1)(\phi Q^2+2Q-\phi Q-2)}
	\sum_{q=1}^Q \sum_{j=1}^J \left[ Q I\{j>q\} -j+1 +\frac{\phi Q(2q-Q-1)}{2(1+\phi Q)} \right] \bar{Y}_{qj}
\end{equation*}\\

This is the first result of our theorem. To prove the second result, we start by writing the following:

\begin{equation*}
	E\left[\hat{\delta}\right] = \frac{12(1+\phi Q)}{Q(Q+1)(\phi Q^2+2Q-\phi Q-2)}
	\sum_{q=1}^Q \sum_{j=1}^J \left[ Q I\{j>q\} -j+1 +\frac{\phi Q(2q-Q-1)}{2(1+\phi Q)} \right] E\left[ \bar{Y}_{qj} \right]
\end{equation*}\\

Plugging in $E\left[ \bar{Y}_{qj} \right] = \mu + \beta_j + I\{j>q\} \delta_{j-q}$ (with $\beta_1\ equiv 0$) and setting $K \equiv \frac{12(1+\phi Q)}{Q(Q+1)(\phi Q^2+2Q-\phi Q-2)}$, the above is equal to:

\begin{equation*}
\begin{gathered}
	K \sum_{q=1}^Q \sum_{j=1}^J \left[ Q I\{j>q\} -j+1 +\frac{\phi Q(2q-Q-1)}{2(1+\phi Q)} \right] (\mu + \beta_j + I\{j>q\} \delta_{j-q}) \\
	= K\mu \sum_{q=1}^Q \sum_{j=1}^J \left[ Q I\{j>q\} -j+1 +\frac{\phi Q(2q-Q-1)}{2(1+\phi Q)} \right] + \\
	K \sum_{q=1}^Q \sum_{j=1}^J \left[ Q I\{j>q\} -j+1 +\frac{\phi Q(2q-Q-1)}{2(1+\phi Q)} \right] \beta_j + \\
	K \sum_{q=1}^Q \sum_{j=1}^J \left[ Q I\{j>q\} -j+1 +\frac{\phi Q(2q-Q-1)}{2(1+\phi Q)} \right] I\{j>q\} \delta_{j-q}
\end{gathered}
\end{equation*}\\

Next, we show that each of the first two terms is equal to zero. We start with the first term and write:

\begin{equation*}
\begin{gathered}
	K\mu \sum_{q=1}^Q \sum_{j=1}^J \left[ Q I\{j>q\} -j+1 +\frac{\phi Q(2q-Q-1)}{2(1+\phi Q)} \right] \\
	= K\mu \left[ \sum_{q=1}^Q \sum_{j=1}^J QI\{j>q\} - \sum_{q=1}^Q \sum_{j=1}^J j + QJ + \sum_{q=1}^Q \sum_{j=1}^J \frac{\phi Q(2q-Q-1)}{2(1+\phi Q)} \right] \\
	= K\mu \left[ \frac{Q^2(Q+1)}{2} - \frac{Q(Q+1)(Q+2)}{2} + Q(Q+1) + \frac{\phi Q(Q+1)}{2(1+\phi Q)}(0) \right] \\
	=0
\end{gathered}
\end{equation*}\\

Next, we show that the second term equals zero:

\begin{equation*}
\begin{gathered}
	K \sum_{q=1}^Q \sum_{j=1}^J \left[ Q I\{j>q\} -j+1 +\frac{\phi Q(2q-Q-1)}{2(1+\phi Q)} \right] \beta_j \\
	= K \left[ Q \sum_{q=1}^Q (I\{2>q\}\beta_2+...+I\{J>q\}\beta_J) + Q(\beta_2+...+\beta_J) - Q\sum_{j=1}^J j\beta_j \right] \\
	= K \left[ Q \sum_{q=1}^Q (I\{2>q\}\beta_2+...+I\{J>q\}\beta_J) + Q(\beta_2+...+\beta_J) - Q\sum_{j=1}^J j\beta_j \right] \\
	= KQ \left[ (1\beta_2+2\beta_3+...+Q\beta_J) + (\beta_2+...+\beta_J) - (2\beta_2+...+J\beta_J) \right] \\
	= 0
\end{gathered}
\end{equation*}\\

Thus, the first two terms each equal zero, and so we can write:

\begin{equation*}
\begin{gathered}
	E\left[\hat{\delta}\right] = K \sum_{q=1}^Q \sum_{j=1}^J \left[ Q I\{j>q\} -j+1 +\frac{\phi Q(2q-Q-1)}{2(1+\phi Q)} \right] I\{j>q\} \delta_{j-q} \\
	= K \sum_{q=1}^Q \sum_{j=1}^J \left[ Q+1-\frac{\phi Q(Q+1)}{2(1+\phi Q)} - j + \frac{\phi qQ}{1+\phi Q} \right] I\{j>q\} \delta_{j-q} \\
	= K \left[ \frac{(2+\phi Q)(Q+1)}{2(1+\phi Q)} \sum_{q=1}^Q \sum_{j=1}^J I\{j>q\}\delta_{j-q} - \sum_{q=1}^Q \sum_{j=1}^J jI\{j>q\}\delta_{j-q} + \frac{\phi Q}{1+\phi Q} \sum_{q=1}^Q \sum_{j=1}^J qI\{j>q\} \delta_{j-q} \right]
\end{gathered}
\end{equation*}\\

Next, we simplify the three sums in the equation above as follows:

\begin{equation*}
\begin{gathered}
	\sum_{q=1}^Q \sum_{j=1}^J I\{j>q\}\delta_{j-q} = \sum_{j=1}^Q (Q-j+1)\delta_j \\
	\sum_{q=1}^Q \sum_{j=1}^J jI\{j>q\}\delta_{j-q} = \sum_{j=1}^Q \left[ \frac{(Q+1)(Q+2) - j(j+1)}{2} \right] \delta_j \\
	\sum_{q=1}^Q \sum_{j=1}^J qI\{j>q\} \delta_{j-q} = \sum_{j=1}^Q \frac{(Q-j+1)(Q-j+2)}{2} \delta_j
\end{gathered}
\end{equation*}\\

Plugging these identities into the equation above, we obtain:

\begin{equation*}
\begin{gathered}
	E\left[\hat{\delta}\right] = K \left[ \frac{(2+\phi Q)(Q+1)}{2(1+\phi Q)} \sum_{j=1}^Q (Q-j+1)\delta_j - \sum_{j=1}^Q \left[ \frac{(Q+1)(Q+2) - j(j+1)}{2} \right] \delta_j + \frac{\phi Q}{1+\phi Q} \sum_{j=1}^Q \frac{(Q-j+1)(Q-j+2)}{2} \delta_j \right] \\
	= \frac{K}{2(1+\phi Q)} \sum_{j=1}^Q \left[ (j-Q-1)(j+2\phi Qj-Q(1+\phi+\phi Q)) \right] \delta_j \\
	= \sum_{j=1}^Q \frac{6(j-Q-1)((1+2\phi Q)j-(1+\phi+\phi Q)Q)}{Q(Q+1)(\phi Q^2+2Q-\phi Q-2)} \delta_j
\end{gathered}
\end{equation*}\\

This completes the proof.\\

\bibliographystyle{unsrt}
\bibliography{Stepped_wedge_lag}%

\end{document}